\newenvironment{Smallmatrix}[1]
  {\arraycolsep=2pt\footnotesize
   \array{#1}}
  {\endarray}
\newcommand{\ssmat}[1]{\left[\begin{Smallmatrix}{#1}}
\newcommand{\tamss}{\end{Smallmatrix}\right]}
\newcommand{\spmat}[1]{\left(\begin{Smallmatrix}{#1}}
\newcommand{\tamps}{\end{Smallmatrix}\right)}
\newcommand{\smat}[1]{\left[\!\begin{array}{#1}}
\newcommand{\tams}{\end{array}\!\right]}
\newcommand{\pmat}[1]{\left(\!\!\!\begin{array}{#1}}
\newcommand{\tamp}{\end{array}\!\!\! \right)}
\newcommand{\semat}[1]{\begin{Smallmatrix}{#1}}
\newcommand{\tames}{\end{Smallmatrix}}
\newcommand{\emat}[1]{\!\begin{array}{#1}}
\newcommand{\tame}{\end{array}\!}
\newcommand{\mcal}[1]{\mathcal{#1}}
\newcommand{\mbb}[1]{\mathbb{#1}}
\newcommand{\mbf}[1]{\mathbf{#1}}
\newtheorem{theorem}{Theorem}
\newtheorem{lemma}{Lemma}
\newtheorem{remark}{Remark}
\newtheorem{algorithm}{Algorithm}
\begin{document}
%
\title{Computation of lower bounds \\ for the induced $\mcal{L}_2$ norm of LPV systems}
%
%
%

\author{Tamas~Peni,~\IEEEmembership{Member,~IEEE,}
        and~Peter~J.~Seiler,~\IEEEmembership{Member,~IEEE}
\thanks{T. Peni is with the Systems and Control Laboratory of Institute for Computer Science and Control,  1111 Budapest, Kende u. 13-17., Hungary.
E-mail: peni.tamas@sztaki.mta.hu, pt@scl.sztaki.hu.}
\thanks{P. J. Seiler is with Aerospace Engineering and Mechanics, University of Minnesota, 107 Akerman Hall, 
110 Union St SE,  Minneapolis, MN 55455-0153. E-mail: eile017@umn.edu.}
\thanks{Manuscript, \today}}

%
%

\markboth{}%
{for arXiv}
%



\maketitle

\begin{abstract}
Determining the induced $\mcal{L}_2$ norm of a linear, parameter-varying (LPV)
system is an integral part of many analysis and robust control design
procedures.
 Most prior work
has focused on efficiently computing upper bounds for the induced $\mcal{L}_2$
norm.
The conditions for upper bounds are typically based on scaled
small-gain theorems with dynamic multipliers or dissipation
inequalities with parameter
dependent Lyapunov functions. This paper presents a complementary
algorithm to compute lower bounds for the induced $\mcal{L}_2$ norm. The
proposed approach
computes a lower bound on the gain by restricting the parameter
trajectory to be a periodic signal.
This restriction enables the use of recent results
for exact calculation of the $\mcal{L}_2$ norm for a periodic linear time varying 
system.  The
proposed lower bound algorithm has two benefits. First, the lower
bound complements standard upper bound techniques. Specifically, a
small gap
between the bounds indicates that further computation, e.g. upper
bounds with more complex Lyapunov functions, is unnecessary.
Second, the lower bound algorithm returns a "bad" parameter trajectory
for the LPV system that can be further analyzed
to provide insight into the system performance. 

\end{abstract}

\begin{IEEEkeywords}
Linear parameter varying systems, induced $\mcal{L}_2$ norm, periodic, linear time-varying systems 
\end{IEEEkeywords}

%
\IEEEpeerreviewmaketitle

\section{Introduction}
%
%
%
%
\IEEEPARstart{D}{etermining} the induced $\mcal{L}_2$ norm of an LPV system is an integral part of many analysis and robust control design procedures. In general, this norm cannot be determined explicitly. Most prior work focuses on computing an upper bound on the induced $\mcal{L}_2$ norm. 
The method used to compute the induced gain upper bound depends primarily on the structure of the LPV system.  One class of LPV systems involves state matrices that are rational functions of the parameter.  In this case the LPV system can be expressed as a feedback interconnection (linear fractional transformation, LFT) of a known linear time invariant (LTI) system and a parameter dependent block.  The upper bound for these LFT-type LPV systems is typically computed using scaled mall gain theorems with multipliers and the full block S-procedure  \cite{scherer99_lminotes,apkarian95_ieeetac,scherer01_automatica}.  Another class of LPV systems involves state matrices with an arbitrary dependence on the parameter.  These systems, called "Gridded" LPV systems, can arise via linearization of nonlinear models on a grid of equilibrium operating points \cite{fenwu96_ijrnc,fenwu01_ijc,fenwu95_phd,rugh00_automatica,shamma96_handbook}.  The upper bound for Gridded LPV systems is computed using the dissipation inequality evaluated over a finite set of parameter grid points.  In both cases the upper bound calculation reduces to a convex optimization with linear matrix inequality (LMI) constraints.

This paper addresses the complementary problem, i.e. the approximation for the lower bound of the induced $\mcal{L}_2$ norm. Frozen point analysis is one simple way to compute a lower bound. Specifically, the induced $\mcal{L}_2$ norm of the LPV system over all possible parameter trajectories is clearly lower bounded by the restriction to constant (frozen) parameter trajectories.  A lower bound is thus obtained by computing the maximum gain of the LTI systems obtained by evaluating the LPV system on a grid of frozen parameter values.  To our knowledge this is currently the only method available to compute a lower bound on the induced $\mcal{L}_2$ norm of an LPV system. Unfortunately, this approach produces very conservative results in many cases as it neglects the variation of the scheduling parameter. To compute a better estimate, the $\mcal{L}_2$ norm has to be evaluated over time varying parameter trajectories. This concept introduces a complex optimization problem involving a maximization over both the allowable parameter trajectories and the $\mcal{L}_2$ inputs to the LPV system.  

The proposed lower bound algorithm restricts the scheduling trajectories to periodic signals. This restriction is useful because the LPV system evaluated on a fixed periodic trajectory is simply a periodic linear time-varying (PLTV) system. Moreover, the induced $\mcal{L}_2$ norm of a PLTV system can be exactly determined by using recently developed numerical methods, see e.g. \cite{cantoni09_automatica}.  Thus the lower bound algorithm only requires the maximization over periodic parameter trajectories and the maximization over $\mcal{L}_2$ inputs is not required.  Specifically, the lower bound algorithm maximizes a cost function related to the induced $\mcal{L}_2$ norm (described in detail in Sec. \ref{sec:lpvL2norm}) over a finite dimensional subspace of periodic scheduling trajectories. The output of the algorithm is the lower bound estimate and the worst-case (bad) parameter trajectory. In addition, we construct a worst-case input signal that approximately achieves the computed induced gain lower bound for the LPV system. The construction of the worst-case input is based on the PLTV results collected in \cite{colaneri00_kybernetica} and \cite{bittanticolaneri09_psbook}. In \cite{colaneri00_kybernetica} the worst case input is derived by using a special, frequency domain representation for the PLTV system. In this paper the worst-case input construction and proofs are provided using only time-domain operators.  This approach, detailed in section \ref{sec:wcinput}, streamlines the construction of the worst-case input.

It is important to note that our algorithm does not assume any specific structure on the LPV system. Thus it is developed for the "Gridded" case. It trivially applies for "LFT" models as well although the additional rational structure in LFT models may lead to faster lower bound algorithms than those developed here. It will also be shown, by numerical examples, that the proposed lower bound algorithm and the known upper bound methods complement each other. Specifically,  these methods, applied together, can yield a tight bound for the induced $\mcal{L}_2$ norm of an LPV system. 

The paper is organised as follows: in the next section the properties of periodic systems are summarized and the recent results related to the induced $\mcal{L}_2$ norm computation are reviewed. In Sec. \ref{sec:wcinput} we provide an algorithm to construct the worst-case input for PLTV systems. Our approach for computing lower bounds on $\mcal{L}_2$ norm of LPV systems is presented in Sec. \ref{sec:lpvL2norm} and \ref{sec:trajparam}. Sec. \ref{sec:examples} is devoted to the numerical simulations and analysis. The conclusions are drawn in section \ref{sec:conclusion}, where the future works required to improve our results are also summarized.

The notations used in the paper are fairly standard. $\mbb{R}$ and $\mbb{C}$ denote the fields of real and complex numbers, respectively. $\mbb{D}$ and $\partial\mbb{D}$ are the unit disc and unit circle in  the complex plane $\mbb{C}$. If $\mbf{T}:E \rightarrow F$ is a bounded, linear operator between Hilbert spaces $E$ and $F$, then the adjoint of $\mbf{T}$ is denoted by $\mbf{T}^*$ and is defined as follows: $\langle \mbf{T}u,y \rangle_F=\langle u,\mbf{T}^*y\rangle_E$, $\forall u\in E$ and $\forall y \in F$. The space of square-integrable signals $f:[0,\infty)\rightarrow \mbb{R}^n$ is denoted by $\mcal{L}_2(\mbb{R}^{n})$. The inner product and norm in $\mcal{L}_2(\mbb{R}^{n})$  are defined as $\langle f,g \rangle_{\mcal{L}_2(\mbb{R}^n)}=\int_0^\infty f(t)^Tg(t)dt$ and $\|f\|=\sqrt{\langle f,f \rangle_{\mcal{L}_2(\mbb{R}^{n})}}$, respectively. If $\mbf{G}$ is a bounded linear operator, such that $\mbf{G}:w\in\mcal{L}_2(\mbb{R}^{p})\mapsto z\in\mcal{L}_2(\mbb{R}^{q})$ then its induced $\mcal{L}_2$ norm is defined as 
\begin{align*}
\|\mbf{G}\|:=\sup_{0\ne w\in\mcal{L}_2(\mbb{R}^{p})} \frac{\|z\|}{\|w\|}
\end{align*}
$ \mcal{L}_{2,[0,h)}(\mbb{R}^{n})$ denotes the space of square-integrable functions on the interval $[0,h)$ with inner product  $\langle f,g \rangle_{\mcal{L}_{2,[0,h)}(\mbb{R}^n)}=\int_0^h f(t)^Tg(t)dt$. $\ell_2(E)$ denotes the square-summable sequences $w=\{w_k\}_{k=0}^\infty$ in the Hilbert space $E$ with inner product $\langle w,v \rangle_{\ell_2(E)}:=\sum_{k=0}^\infty \langle w_k,v_k\rangle_E$.


 

\section{Periodic systems}
\subsection{Background}
This section reviews known results on PLTV systems.  Most results can be found in \cite{bittanticolaneri09_psbook}.  In particular, we consider the linear time-varying system $G$ of the form:
\begin{align}
G: ~~ \emat{rcl} 
\dot{x}(t)&=&A(t)x(t)+B(t)w(t) \\
z(t)&=&C(t)x(t)+D(t)w(t)
\tame
\label{eq:pltv}
\end{align}
that has the following properties:
\begin{enumerate}[(i)]
\item the system matrices $A(t):\mbb{R}\rightarrow \mbb{R}^{n\times n}$, $B(t):\mbb{R}\rightarrow \mbb{R}^{n\times p}$, $C(t):\mbb{R}\rightarrow \mbb{R}^{q\times n}$, $D(t):\mbb{R}\rightarrow \mbb{R}^{q\times p}$ are bounded continuous functions of time, 
\item are $h$-periodic, i.e. $A(t+h)=A(t)$, $B(t+h)=B(t)$, $C(t+h)=C(t)$, $D(t+h)=D(t)$, $\forall t$ and 
\item the dynamics \eqref{eq:pltv} are internally stable.
\end{enumerate}
The state transition matrix $\Phi(t,\tau)$ associated with the autonomous dynamics $\dot{x}(t)=A(t)x(t)$ is defined for all $(t,\tau)$ as the linear mapping from $x(\tau)$ to $x(t)$, i.e. $x(t)=\Phi(t,\tau)x(\tau)$. It is easy to check that $\Phi(t,\tau)$ is periodic with $h$, i.e. $\Phi(t+h,\tau+h)=\Phi(t,\tau)$, and satisfies the matrix differential equation $ \frac{d}{dt}{\Phi}(t,\tau)=A(t)\Phi(t,\tau)$, $\Phi(\tau,\tau)=I$. The state transition matrix over one period is called the monodromy matrix: $\Psi(t):=\Phi(t+h,t)$.  Clearly,  the monodromy matrix is periodic with $h$, i.e. $\Psi(t+h)=\Psi(t)$. 

The system defined in \eqref{eq:pltv} is stable and hence it defines a bounded operator $\mbf{G}:=w\in\mcal{L}_2(\mbb{R}^{p})\mapsto z\in\mcal{L}_2(\mbb{R}^{q})$, such that
\begin{align}
z(t)=\int_0^tC(t)\Phi(t,\tau)B(\tau)w(\tau)~d\tau+D(t)w(t)
\label{eq:Gopdef}
\end{align}
i.e. $z(t)$ is the response of \eqref{eq:pltv} to the input signal $w(t)$ if $x(0)=0$. We are interested in  $\mbf{G}$. It is shown in \cite{bamieh92_ieeetac} that $\|\mbf{G}\|$ is equal to the norm of the "lifted" operator $\hat{\mbf{G}}:\ell_2(\mcal{L}_{2,[0,h)}(\mbb{R}^{p})) \rightarrow \ell_2(\mcal{L}_{2,[0,h)}(\mbb{R}^{q}))$, which, has a finite dimensional state-space realization in the following form \cite{bamieh91_scl}:
\begin{align}
\xi_{k+1}&=\hat A\xi_k+\hat B \hat w_k \nonumber \\
\hat z_k&=\hat C\xi_k +\hat D \hat w_k 
\label{eq:lifted-pltv}
\end{align}
where $\xi_k\in\mbb{R}^n$   and $\hat A: \mbb{R}^n \rightarrow \mbb{R}^n $, $\hat B:\mcal{L}_{2,[0,h)}(\mbb{R}^p)\rightarrow \mbb{R}^n$, $\hat C:\mbb{R}^n \rightarrow \mbb{R}^q $, $\hat D:\mcal{L}_{2,[0,h)}(\mbb{R}^p)\rightarrow \mcal{L}_{2,[0,h)}(\mbb{R}^q)$ are linear operators defined as follows:
\begin{align}
\hat A\xi_k&:=\Psi(0)\xi_k,&&
\hat B\hat w_k:=\int_0^h\Phi(h,\tau)B(\tau)\hat{w}_k(\tau)d\tau\nonumber\\
\hat C\xi_k&:=C(t)\Phi(t,0)\xi_k, &&
\hat D\hat w_k:=\int_0^t C(t)\Phi(t,\tau)\hat{w}_k(\tau)d\tau+D(t)\hat w_k(t)
\end{align}
Although \eqref{eq:lifted-pltv} is a finite dimensional system, it is still not suitable for numerical computations, because  its system matrices are operators.  Therefore, based on  the same idea applied  for LTI systems (see e.g. Chapter 21 in \cite{zhou96_roc}),  a discrete-time, linear, time-invariant system $\underline{G}_\gamma$ is introduced in \cite{dullerud99_scl},\cite{cantoni09_automatica}. The state-space matrices of $\underline{G}_\gamma$ are computed as follows:
\begin{align}
\underline A_\gamma &:= \hat{A}+\hat{B}(\gamma^2I-\hat D^*\hat D)^{-1}\hat D^*\hat C \nonumber\\
\underline B_\gamma\underline B_\gamma^*&:=\gamma \hat B(\gamma^2I-\hat D^*\hat D)^{-1}\hat B^*  \label{eq:AgBgCg} \\
\underline C_\gamma^*\underline C_\gamma&:=\gamma \hat C^*(\gamma^2I-\hat D\hat D^*)^{-1}\hat C \nonumber
\end{align} 
More specifically, the matrices $\underline{B}_\gamma$ and $\underline{C}_\gamma$ 
are defined to be full rank matrices that satisfy the equalities that appear in \eqref{eq:AgBgCg}. Note that $\underline A_\gamma,\underline B_\gamma,\underline C_\gamma$  are now real matrices $\underline A_\gamma\in\mbb{R}^{n\times n},\underline B_\gamma\in\mbb{R}^{n\times p},\underline C_\gamma\in\mbb{R}^{q\times n}$. The following theorem, taken from \cite{cantoni09_automatica}, proves that there is a strong relation between the induced $\ell_2$-norm of  $\underline G_\gamma$ and the norm of $\hat{\mbf{G}}$. 

\begin{theorem} \label{thm:cantoni1}
The following statements are equivalent:
\begin{enumerate}[a)]
\item $\text{eig}(\hat{A})\in\mbb{D} \backslash \mbb{\partial D}$  and $\|\hat{\mbf{G}}\|<\gamma$
\item $\text{eig}(\underline A_\gamma)\in\mbb{D} \backslash \mbb{\partial D}$ and  $\|\underline G_\gamma\|_\infty<1$
\end{enumerate}
where $\|\underline G_\gamma\|_\infty$ is the standard $\mcal{H}_\infty$ norm of the LTI system $\underline G_\gamma$.
\end{theorem}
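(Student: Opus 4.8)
The plan is to read the claimed equivalence as a finite-rank reduction of the discrete-time bounded real lemma applied to the lifted system \eqref{eq:lifted-pltv}, the reduction being precisely what the definitions \eqref{eq:AgBgCg} are engineered to perform. Before that I would dispose of well-posedness: feeding \eqref{eq:lifted-pltv} a single nonzero block $\hat w_0$ with $\xi_0=0$ gives $\hat z_0=\hat D\hat w_0$, so $\|\hat{\mbf G}\|\ge\|\hat D\|$, and hence statement a) forces $\|\hat D\|<\gamma$, i.e. $R:=\gamma^2 I-\hat D^*\hat D$ is coercive, and (since $\hat D^*\hat D$ and $\hat D\hat D^*$ have equal norm) $\tilde R:=\gamma^2 I-\hat D\hat D^*$ is coercive too. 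In the converse direction b)$\Rightarrow$a) the matrices $\underline A_\gamma,\underline B_\gamma,\underline C_\gamma$ are assumed to exist, which again presupposes this coercivity. So I may work under it throughout, using the push-through identity $\hat D R^{-1}=\tilde R^{-1}\hat D$.

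The core of the argument is the strict bounded real lemma for a discrete-time system with a \emph{finite-dimensional} state but Hilbert-space-valued input, output and feedthrough --- equivalently, the dissipation inequality with a quadratic storage function $\xi\mapsto\xi^*X\xi$, $X=X^*\succ0$, $X\in\mbb R^{n\times n}$ (see, e.g., \cite{dullerud99_scl}). It states that $\text{eig}(\hat A)\in\mbb D\backslash\partial\mbb D$ and $\|\hat{\mbf G}\|<\gamma$ hold if and only if such an $X$ exists with
\begin{align*}
\begin{bmatrix}\hat A^*X\hat A-X+\hat C^*\hat C & \hat A^*X\hat B+\hat C^*\hat D\\[2pt]\hat B^*X\hat A+\hat D^*\hat C & \hat B^*X\hat B+\hat D^*\hat D-\gamma^2 I\end{bmatrix}\prec0 ;
\end{align*}
the essential point is that $X$ is a genuine $n\times n$ matrix, while the operators enter only through $\hat C^*\hat C$, $\hat D^*\hat D$, $\hat B^*X\hat B$, and the like. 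The $(2,2)$ block must be coercive negative, so $R-\hat B^*X\hat B\succ0$, and taking its Schur complement yields the finite matrix inequality
\begin{align*}
\hat A^*X\hat A-X+\hat C^*\hat C+(\hat A^*X\hat B+\hat C^*\hat D)(R-\hat B^*X\hat B)^{-1}(\hat B^*X\hat A+\hat D^*\hat C)\prec0 .
\end{align*}

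Now I would collapse the operators. Since $\hat B^*X\hat B$ is finite rank, the matrix-inversion lemma writes $(R-\hat B^*X\hat B)^{-1}$ as $R^{-1}$ plus a finite-rank correction built from $\hat B R^{-1}\hat B^*$ and $X$; substituting this and using $\hat D R^{-1}=\tilde R^{-1}\hat D$ to move every $\hat D$ off the outside of a $\hat C$ or an $\hat A$, every operator occurrence reorganises into the finite matrices $\hat A+\hat B R^{-1}\hat D^*\hat C=\underline A_\gamma$, $\hat B R^{-1}\hat B^*=\tfrac{1}{\gamma}\underline B_\gamma\underline B_\gamma^*$ and $\gamma^2\hat C^*\tilde R^{-1}\hat C=\gamma\,\underline C_\gamma^*\underline C_\gamma$. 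The resulting inequality is exactly the Schur-complement form of the strict bounded real LMI for the LTI system $\underline G_\gamma=(\underline A_\gamma,\underline B_\gamma,\underline C_\gamma,0)$ at level $1$, which by the ordinary (finite-dimensional) bounded real lemma admits a solution $X\succ0$ iff $\text{eig}(\underline A_\gamma)\in\mbb D\backslash\partial\mbb D$ and $\|\underline G_\gamma\|_\infty<1$. The Schur complement and the matrix-inversion lemma being identities, every step is reversible, so chaining them proves a)$\Leftrightarrow$b).

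The hard part will be the bookkeeping in that last collapse: verifying that the cross terms between $\hat A^*X\hat B$ and $\hat D^*\hat C$ and the pure $\hat C$ contributions truly reassemble into $\underline A_\gamma$, $\underline B_\gamma\underline B_\gamma^*$ and $\underline C_\gamma^*\underline C_\gamma$ with consistent powers of $\gamma$ and with a \emph{common} certificate $X$ on both sides --- this is the discrete-time counterpart of the \emph{loop-shifting} computation in $\Hinf$ control, routine but long --- and making sure the two a priori different eigenvalue conditions (on $\hat A$ in a), on $\underline A_\gamma$ in b)) are each the one naturally delivered by the corresponding bounded real lemma. A cleaner alternative I might pursue is to express both bounded real conditions through their symplectic operator pencils: $\|\hat{\mbf G}\|<\gamma$ with $\hat A$ Schur is equivalent to that pencil having no eigenvalue on $\partial\mbb D$, the same finite-rank reduction shows this pencil shares its spectrum with the symplectic pencil of $\underline G_\gamma$, and the latter having no unit-circle eigenvalue is equivalent to $\|\underline G_\gamma\|_\infty<1$ with $\underline A_\gamma$ Schur; this trades the two LMI invocations for a single pencil argument while leaving the finite-rank collapse intact.
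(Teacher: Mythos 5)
The paper itself offers no proof of this theorem: it is imported verbatim from \cite{cantoni09_automatica}, so there is no internal argument to compare yours against. That said, your strategy is the natural one and is consistent with the machinery the paper leans on (the Chapter~21 construction of \cite{zhou96_roc} and the operator framework of \cite{dullerud99_scl}): reduce a) to a strict discrete-time bounded real inequality for the lifted system \eqref{eq:lifted-pltv}, Schur-complement, and collapse the operator data onto the finite matrices of \eqref{eq:AgBgCg}. Your preliminary observations are correct ($\|\hat{\mbf G}\|\ge\|\hat D\|$, coercivity of $\gamma^2 I-\hat D^*\hat D$, the push-through identity), and your dictionary $\underline A_\gamma=\hat A+\hat BR^{-1}\hat D^*\hat C$, $\hat BR^{-1}\hat B^*=\gamma^{-1}\underline B_\gamma\underline B_\gamma^*$, $\gamma^2\hat C^*\tilde R^{-1}\hat C=\gamma\,\underline C_\gamma^*\underline C_\gamma$ matches the definitions in \eqref{eq:AgBgCg}.

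As a proof, however, two genuine gaps remain. First, the result you invoke as the engine --- an if-and-only-if strict bounded real lemma, with a finite $n\times n$ certificate $X$, for a discrete-time system whose $\hat B,\hat C,\hat D$ are Hilbert-space operators --- is not quoted in a form you can simply point to: sufficiency is the easy dissipation argument, but necessity (producing a finite-dimensional $X$ from $\mathrm{eig}(\hat A)$ Schur and $\|\hat{\mbf G}\|<\gamma$) requires an operator KYP-type argument whose proof is essentially of the same nature as the reduction you are trying to establish, so a ``see, e.g.'' citation risks circularity unless the precise infinite-dimensional statement is nailed down. Second, the decisive step --- verifying that after the Woodbury/push-through manipulations the collapsed inequality is exactly the level-one bounded real inequality for $(\underline A_\gamma,\underline B_\gamma,\underline C_\gamma,0)$ with the same $X$, with consistent powers of $\gamma$, and that Schur stability of $\underline A_\gamma$ (resp.\ of $\hat A$) is extracted from the collapsed (resp.\ uncollapsed) inequality --- is the actual content of the theorem, and you explicitly defer it as bookkeeping; until it is written out the proposal is a plan, not a proof. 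Finally, your alternative pencil route contains a concrete error: for discrete-time systems the absence of unit-circle eigenvalues of the symplectic pencil/monodromy matrix is \emph{not} equivalent to $\|\underline G_\gamma\|_\infty<1$; an additional pointwise check such as $\|\underline C_\gamma(I-\underline A_\gamma)^{-1}\underline B_\gamma\|<1$ is needed, exactly as in Theorem~2 of the paper (Theorem~21.12 of \cite{zhou96_roc}), so that shortcut as stated would fail.
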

\subsection{Induced norm computation}
Theorem \ref{thm:cantoni1}  gives the base of the bisection algorithm, proposed in \cite{cantoni09_automatica} to compute the $\mcal{L}_2$-norm of the periodic system \eqref{eq:pltv}. 
Therefore, an efficient method is needed to determine the system matrices $\underline A_\gamma,\underline B_\gamma,\underline C_\gamma$. For this, consider the following  differential equation associated with \eqref{eq:pltv}:
\begin{align}
\dot{e}(t)=H(t)e(t) 
\label{eq:HCantoni}
\end{align}
where (suppressing the notation of time dependence)
\begin{align}
H_{11}&= -A^T-C^TD(\gamma^2I-D^TD)^{-1}B^T \nonumber\\
H_{12}&= -\gamma C^T(\gamma^2I-DD^T)^{-1}C \nonumber \\
H_{21}&= \gamma B(\gamma^2I-D^TD)B^T \\
H_{22}&=A+B(\gamma^2I-D^TD)^{-1}D^TC \nonumber 
\end{align}
\noindent Note that \eqref{eq:HCantoni} is a Hamiltonian system ($JH(t)$ is symmetric with $J=\left[\begin{smallmatrix}0&I\\-I&0\end{smallmatrix}\right]$). Let the monodromy matrix of \eqref{eq:HCantoni} be denoted by $\Psi_H(t)$ and let $Q=\Psi_H(0)$.   It was shown in \cite{cantoni09_automatica}, that
\begin{align}
\underline B_\gamma\underline B_\gamma^T&=Q_{21}Q_{11}^{-1} \nonumber\\
\underline C_\gamma^T\underline C_\gamma&=-Q_{11}^{-1}Q_{12}\nonumber\\
\underline A_\gamma&=Q_{22}-Q_{21}Q_{11}^{-1}Q_{12} \label{eq:AgBgCg&Q}
\end{align}
where $Q=\left[\begin{smallmatrix}Q_{11}&Q_{12}\\Q_{21}&Q_{22}\end{smallmatrix}\right]$. Since $Q$ is symplectic \cite{neishtadt07_pres}, i.e. $Q^TJQ=J$, thus $Q$ is uniquely determined by \eqref{eq:AgBgCg&Q} in the following form:
\begin{align}
Q=\smat{cc}
\underline A_\gamma ^{-T} & -\underline A_\gamma ^{-T}\underline C_\gamma^T\underline C_\gamma \\ 
\underline B_\gamma\underline B_\gamma^T\underline A_\gamma ^{-T} & \underline A_\gamma-\underline B_\gamma\underline B_\gamma^T \underline A_\gamma ^{-T} \underline C_\gamma^T\underline C_\gamma
\tams 
\label{eq:QfromAgBgCg}
\end{align}
One possible method to determine the $\underline A_\gamma,\underline B_\gamma,\underline C_\gamma$ matrices can be given as follows. Integrate first the Hamiltonian system \eqref{eq:HCantoni} on $[0,h)$ starting from the matrix initial condition $e(0)=I$. Then $e(h)=Q$. Determine the system matrices from $Q$  by using \eqref{eq:AgBgCg&Q}. Since $H(t)$ is not stable this approach is numerically not reliable. The method proposed in \cite{cantoni09_automatica} is based on the following relations:
\begin{align*}
\underline A_\gamma=X(h),~~~\underline C_\gamma^T \underline C_\gamma=Z(0),~~~\underline B_\gamma\underline B_\gamma^T=Y(h)
\end{align*}
where $X(h),Z(0),Y(h)$ are point solutions of the differential Riccati equations
\begin{align}
\dot{Z}&=-H_{22}^TZ-ZH_{22}-\gamma ZB(\gamma^2I-D^TD)^{-1}B^TZ+\gamma C^T(\gamma^2I-DD^T)^{-1}C \label{eq:RicZ}\\
\dot{X}&=(H_{22}+\gamma B(\gamma^2I-D^TD)^{-1}B^TZ)X  \label{eq:RicX}\\
\dot{Y}&=H_{22}Y+YH_{22}^T+\gamma YC^T(\gamma^2I-DD^T)^{-1}CY+\gamma B(\gamma^2I-D^TD)^{-1}B^T  \label{eq:RicY}
\end{align}
\noindent with boundary conditions $Z(h)=0,X(0)=I$ and $Y(0)=0$. Integrating these Riccati equations is a numerically better conditioned problem than the direct integration of \eqref{eq:HCantoni} \cite{cantoni09_automatica}. Moreover, \eqref{eq:RicZ} has a very useful property \cite{cantoni09_automatica}:  
\begin{lemma}
Assume $\gamma^2I-D(t)^TD(t)>0$  for all $t \in [0,h)$. Then \eqref{eq:RicZ} has a bounded solution over the interval $[0,h]$ if and only if $\|\hat{D}\|<\gamma$. 
\end{lemma}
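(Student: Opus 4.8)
The statement is a finite-horizon bounded real lemma in disguise. Indeed $\hat D$ is exactly the input--output map of the one-period LTV system \eqref{eq:pltv} started from $x(0)=0$, so the condition $\|\hat D\|<\gamma$ is a finite-horizon $\gamma$-contractiveness condition, and \eqref{eq:RicZ} together with its terminal condition $Z(h)=0$ is the Riccati differential equation obtained by completing the square in $w$ inside the quadratic functional $J_{[t,h]}(w):=\gamma^2\|w\|^2_{\mcal{L}_{2,[t,h)}}-\|z\|^2_{\mcal{L}_{2,[t,h)}}$ along trajectories of \eqref{eq:pltv}. The plan is to make this correspondence precise and then argue both implications through the value function $\mcal{V}(t,x_0):=\inf\{J_{[t,h]}(w): w\in\mcal{L}_{2,[t,h)}\}$, where $x$ is the trajectory of \eqref{eq:pltv} driven by $w$ with $x(t)=x_0$.

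First I would record the completing-the-square identity: whenever $Z(\cdot)$ solves \eqref{eq:RicZ} on a subinterval $[t_0,h]$, then along every trajectory of \eqref{eq:pltv} on that interval one has $\gamma^2|w|^2-|z|^2=\tfrac{d}{dt}\!\left(\gamma\,x^TZx\right)+(w-w^\star)^T(\gamma^2 I-D^TD)(w-w^\star)$ with $w^\star=K(t)x$ for an explicit bounded gain $K$; integrating and using $Z(h)=0$ gives $J_{[t_0,h]}(w)=-\gamma\,x(t_0)^TZ(t_0)x(t_0)+\big\|(\gamma^2 I-D^TD)^{1/2}(w-w^\star)\big\|^2$. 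This identifies $\mcal{V}(t,\cdot)$ with the quadratic form $-\gamma Z(t)$ on the maximal interval of existence of $Z$, and it already yields sufficiency: if $Z$ is bounded on $[0,h]$, set $t_0=0$ and $x(0)=0$, so that $z=\hat Dw$ and $J_{[0,h]}(w)=\big\|(\gamma^2 I-D^TD)^{1/2}(w-Kx)\big\|^2\ge\delta\|w-Kx\|^2$, where $\delta>0$ is a uniform lower bound for $\gamma^2 I-D^TD$ on the compact interval $[0,h]$ (the hypothesis gives positivity on $[0,h)$, and continuity together with $h$-periodicity extend it to $t=h$). Since $\mcal{T}:w\mapsto x$ is a strictly causal Volterra operator on $\mcal{L}_{2,[0,h)}$, the map $w\mapsto w-Kx=(I-K\mcal{T})w$ is boundedly invertible, hence $\|w-Kx\|\ge c\|w\|$ for some $c>0$; therefore $\|\hat Dw\|^2=\|z\|^2\le(\gamma^2-\delta c^2)\|w\|^2$ for all $w$, i.e. $\|\hat D\|<\gamma$.

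For necessity, assume $\|\hat D\|<\gamma$. The key observation is that each ``tail'' operator $\hat D_{[t,h)}$ --- the input--output map of \eqref{eq:pltv} on $[t,h)$ with zero state at $t$ --- is a compression of $\hat D$ (extend the input by zero on $[0,t)$, apply $\hat D$, then restrict the output to $[t,h)$), so $\|\hat D_{[t,h)}\|\le\|\hat D\|<\gamma$ uniformly in $t\in[0,h]$. Writing $z=z_{x_0}+\hat D_{[t,h)}w$ with $z_{x_0}(s)=C(s)\Phi(s,t)x_0$ the free response (whose $\mcal{L}_2$ norm is at most $\beta|x_0|$ for a $\beta$ uniform on $[0,h]$, by continuity of the data and of $\Phi$), the functional $J_{[t,h]}(w)$ is strictly convex and coercive in $w$, so its infimum is attained and equals a quadratic form $x_0^TM(t)x_0$ with $-cI\preceq M(t)\preceq 0$ for a constant $c$ uniform over $t\in[0,h]$ (depending only on $\gamma-\|\hat D\|$, $\beta$ and $\gamma$). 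On the maximal interval of existence of the Riccati solution the identity of the previous paragraph forces $\gamma Z(t)=-M(t)$, hence $\|Z(t)\|\le c/\gamma$ there; since a Riccati differential equation can only fail to be continued by blowing up, $Z$ has no finite escape time in $[0,h]$ and is therefore bounded on $[0,h]$.

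The main obstacle I anticipate is the bookkeeping that makes the Riccati--value-function correspondence exact: matching \eqref{eq:RicZ} --- in the paper's specific $\gamma$-normalization of $\underline B_\gamma,\underline C_\gamma$, which is why single powers of $\gamma$ (rather than $\gamma^2$) appear, and which forces use of the push-through identity $I+D(\gamma^2 I-D^TD)^{-1}D^T=\gamma^2(\gamma^2 I-DD^T)^{-1}$ --- to the Hamilton--Jacobi equation for $\mcal{V}$, and verifying that $H_{22}=A+B(\gamma^2 I-D^TD)^{-1}D^TC$ is exactly the closed-loop generator left after the square in $w$ is completed. Once that identification is in place, the two strict inequalities, the compression bound, and the Volterra invertibility are routine. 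A secondary point requiring a line of care is that the hypothesis is stated on $[0,h)$ whereas uniform positivity and boundedness are used on the closed interval $[0,h]$, which is legitimate by continuity and $h$-periodicity of the coefficient matrices.
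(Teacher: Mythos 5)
The paper never proves this lemma: it is stated as a quoted property of \eqref{eq:RicZ} with a citation to \cite{cantoni09_automatica}, so there is no in-paper argument to compare yours against line by line. Judged on its own terms, your proof is correct in strategy and essentially complete, and it is the natural route: the terminal condition $Z(h)=0$ plus completion of the square turns \eqref{eq:RicZ} into the finite-horizon bounded real lemma for the one-period operator $\hat D$. The two points where such arguments usually leak are handled properly here: in the sufficiency direction, the strictness $\|\hat D\|<\gamma$ (rather than $\le\gamma$) is extracted from the bounded invertibility of $I-K\mcal{T}$, legitimate because $K\mcal{T}$ is a Volterra operator on a finite interval (spectral radius zero) with $K$ bounded thanks to the boundedness of $Z$ and the uniform positivity of $\gamma^2I-D^TD$ on $[0,h]$ (your periodicity/continuity remark covers $t=h$); in the necessity direction, the compression bound $\|\hat D_{[t,h)}\|\le\|\hat D\|$, the uniform two-sided bound on the value function, the attainment of the infimum by the closed-loop input, and the fact that a Riccati solution can only fail to continue by blowing up together give the standard no-finite-escape continuation argument.

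One bookkeeping caveat, precisely at the spot you flagged. Completing the square in $J_{[t,h]}$ and using $I+D(\gamma^2I-D^TD)^{-1}D^T=\gamma^2(\gamma^2I-DD^T)^{-1}$ produces, for the quantity whose derivative is $\tfrac{d}{dt}\bigl(\gamma x^TZx\bigr)$, the equation $\dot Z=-H_{22}^TZ-ZH_{22}-\gamma ZB(\gamma^2I-D^TD)^{-1}B^TZ-\gamma C^T(\gamma^2I-DD^T)^{-1}C$ with $Z(h)=0$, i.e.\ the affine term enters with a minus sign in forward time (equivalently, all plus signs in the backward-time form $-\dot Z=\cdots$). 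This, not \eqref{eq:RicZ} exactly as printed, is also the equation consistent with the graph relation $e_1=Z e_2$ for the Hamiltonian \eqref{eq:HCantoni} and with $Z(0)=\underline C_\gamma^T\underline C_\gamma=-Q_{11}^{-1}Q_{12}\succeq 0$ from \eqref{eq:AgBgCg&Q}; note the paper's transcription already contains an evident typo in $H_{21}$ (a missing inverse), so the flipped sign of the affine term in \eqref{eq:RicZ} should be read the same way. With that correction your identity, the identification $\mcal{V}(t,\cdot)=-\gamma Z(t)$, and hence both implications go through exactly as you wrote them; this is a transcription issue in the paper rather than a gap in your argument.
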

The bisection algorithm proposed in \cite{cantoni09_automatica} is based on iteratively solving \eqref{eq:RicZ}-\eqref{eq:RicY} and computing $\|\underline G_\gamma\|_\infty$ at different $\gamma$ values that are tuned in a bisection loop. The output is a lower- and upper bound pair $(\underline\gamma,\overline\gamma)$ satisfying $\underline\gamma\leq \|\mbf{G}\| \leq \overline\gamma$ such that $\overline\gamma-\underline\gamma\leq\varepsilon$, where $\varepsilon$ is a given tolerance. 

\begin{remark}
\noindent If we introduce $T=\left[\begin{smallmatrix}0 & I\\\gamma I & 0\end{smallmatrix}\right]$ and apply the state transformation $\tilde e(t)=T^{-1}e(t)$ in \eqref{eq:HCantoni}, we get the following transformed Hamiltonian system: 
\begin{align}
\dot{\tilde{e}}(t)=\tilde{H}\tilde e(t),~~\text{ with }~~\tilde{H}(t)=T^{-1}H(t) T.
\label{eq:HColaneri}
\end{align}
This system is the same as that is used in \cite{colaneri00_kybernetica}. 
Denote $\Psi_{\tilde H}(t)$ the  monodromy matrix associated with \eqref{eq:HColaneri} and let $\tilde Q:=\Psi_{\tilde H}(0)$. Then, it can be shown that  $\tilde Q=T^{-1}QT$. We use \eqref{eq:HColaneri} instead of \eqref{eq:HCantoni} in the next section, because  \eqref{eq:HColaneri}  is more convenient for the forthcoming derivations. 
\end{remark}
\subsection{Construction of the worst-case input}\label{sec:wcinput}
We are also interested in constructing the worst-case input $w^\circ \in \mathcal{L}_2$ that achieves the induced norm $\| \mathbf{G}\|$.  The signal achieving this gain is not in $\mathcal{L}_2$ and hence our construction approximately achieves this gain. Specifically, for any $\epsilon>0$ we construct an input $w^\circ \in \mathcal{L}_2$ such that $z^\circ := \mathbf{G}w^\circ$ satisfies $\|z^\circ\| \ge \| \mathbf{G}\| \|w^\circ \| - \epsilon$.  This is similar to the LTI case where the worst-case input is a sinusoid and a truncated sinusoid approximately achieves the system gain.    The method used to construct the worst-case input for the PLTV system is based on the proof of Lemma 2.6 in \cite{colaneri00_kybernetica}.  In \cite{colaneri00_kybernetica} the worst-case input is constructed using a special frequency-domain representation of $\mathbf{G}$ defined  over exponentially periodic signals.  In this section, an alternative construction and proof is provided using only time-domain formulations.  This alternative proof streamlines the numerical construction of the worst-case input.

\noindent To this end, let the linear operator $\mbf{T}_G:\mbb{R}^n\times\mcal{L}_{2,[0,h)}(\mbb{R}^p) \rightarrow \mbb{R}^n\times\mcal{L}_{2,[0,h)}(\mbb{R}^q)$ be defined as follows: 
\begin{align*}
\mbf{T}_G(x_0,w) \rightarrow (x_h, z):=(\hat Ax_0+\hat Bw,~\hat Cx_0+\hat D w). 
\end{align*}
$\mbf{T}_G$ is equivalent to the (state-input) $\rightarrow$ (state-output) map realized by \eqref{eq:pltv} over the period $[0,h)$:
\begin{align}
\dot{x}(t)&=A(t)x(t)+B(t)w(t),~~x(0)=x_0 \nonumber\\
z(t)&=C(t)x(t)+D(t)w(t),~~x_h=x(h)
\label{eq:LGss}
\end{align}
If we introduce an inner product in $\mbb{R}^n\times\mcal{L}_{2,[0,h)}(\mbb{R}^\cdot)$ as
\begin{align*}
\langle (x,w),(y,v) \rangle= x^*y+\int_0^h w^*(t)v(t)dt,
\end{align*}
then we can define the adjoint operator $\mbf{T}_G^*$ to satisfy  the equation
\begin{align*}
\langle (\hat x_h,\hat z), \mbf{T}_G(x_0,w) \rangle = \langle \mbf{T}_G^*(\hat x_h,\hat z),(x_0,w)\rangle.
\end{align*}
The next lemma shows how the state-space realization of the adjoint operator is related to the periodic system \eqref{eq:pltv}. 
\begin{lemma} \label{lem:innerprod}
If $(\hat x_0,\hat w)=\mbf{T}_G^*(\hat x_h,\hat z)$ then
\begin{align}
\dot{\hat x}(t)&=-A^*(t)\hat x(t)-C^*(t)\hat z(t),~~\hat x(h)=\hat x_h \nonumber \\
\hat{w}(t)&=B^*(t)\hat x(t)+D^*(t)\hat z(t),~~\hat x_0=\hat x(0) 
\label{eq:adjLGss}
\end{align}
\end{lemma}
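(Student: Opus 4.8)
The plan is to verify directly the defining identity of the adjoint,
$\langle (\hat x_h,\hat z),\,\mbf{T}_G(x_0,w)\rangle=\langle \mbf{T}_G^*(\hat x_h,\hat z),\,(x_0,w)\rangle$,
for every $(x_0,w)\in\mbb{R}^n\times\mcal{L}_{2,[0,h)}(\mbb{R}^p)$, and to show that the right-hand side is realized by the backward system \eqref{eq:adjLGss}. First I would fix $(\hat x_h,\hat z)$ and let $\hat x(t)$ be the (unique) solution of the terminal-value problem $\dot{\hat x}=-A^*\hat x-C^*\hat z$ with $\hat x(h)=\hat x_h$, then set $\hat x_0:=\hat x(0)$ and $\hat w:=B^*\hat x+D^*\hat z$; these are the claimed outputs of $\mbf{T}_G^*$. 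On the other side, let $x(t)$ solve the forward equation $\dot x=Ax+Bw$, $x(0)=x_0$, so that by the definition of $\mbf{T}_G$ in \eqref{eq:LGss} we have $\mbf{T}_G(x_0,w)=\bigl(x(h),\,Cx+Dw\bigr)$.

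The key step is the cross-term (integration-by-parts) identity obtained by differentiating $t\mapsto\langle \hat x(t),x(t)\rangle$. Using the product rule together with the two differential equations, $\tfrac{d}{dt}\langle \hat x,x\rangle=\langle -A^*\hat x-C^*\hat z,\,x\rangle+\langle \hat x,\,Ax+Bw\rangle=-\langle \hat z,Cx\rangle+\langle B^*\hat x,w\rangle$, where the terms $\langle A^*\hat x,x\rangle$ and $\langle \hat x,Ax\rangle$ cancel. Integrating over $[0,h]$ yields $\langle \hat x_h,x(h)\rangle-\langle \hat x_0,x_0\rangle=-\int_0^h\langle \hat z,Cx\rangle\,dt+\int_0^h\langle B^*\hat x,w\rangle\,dt$. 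Next I would expand the left-hand side of the adjoint relation as $\langle (\hat x_h,\hat z),\mbf{T}_G(x_0,w)\rangle=\langle \hat x_h,x(h)\rangle+\int_0^h\langle \hat z,\,Cx+Dw\rangle\,dt$ and substitute the identity just derived; the two occurrences of $\int_0^h\langle \hat z,Cx\rangle\,dt$ cancel, leaving $\langle \hat x_0,x_0\rangle+\int_0^h\langle B^*\hat x+D^*\hat z,\,w\rangle\,dt=\langle (\hat x_0,\hat w),(x_0,w)\rangle$. Since this holds for all $(x_0,w)$ and the adjoint is unique, it identifies $\mbf{T}_G^*(\hat x_h,\hat z)=(\hat x_0,\hat w)$, i.e.\ exactly \eqref{eq:adjLGss}.

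The argument is short, so the remaining issues are bookkeeping rather than genuine obstacles: (i) all ``adjoints'' reduce to transposes, since the signals are real-valued, so $A^*=A^T$, $B^*=B^T$, etc.; (ii) well-posedness of the terminal-value problem for $\hat x$, and the legitimacy of the product rule and the interchange of differentiation/integration, follow from continuity and boundedness of $A,B,C,D$ together with $w,\hat z\in\mcal{L}_{2,[0,h)}$; and (iii) one must keep straight that $x(t)$ is the forward state generated from $x_0$ (so that $\mbf{T}_G$ indeed outputs $x(h)$), whereas $\hat x(t)$ is run backward from $\hat x_h$. The only cancellation needed is the single step $\langle A^*\hat x,x\rangle=\langle \hat x,Ax\rangle$, which is the crux of the computation and is where the structure of the adjoint dynamics $-A^*$ in \eqref{eq:adjLGss} comes from.
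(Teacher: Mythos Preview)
Your proof is correct and complete. It differs from the paper's argument in method: the paper works with the explicit state-transition-matrix representation, writing $x(h)$ and $z(t)$ via $\Phi(t,\tau)$, substituting into the inner product, and then \emph{defining} $\hat x(\tau):=\Phi(h,\tau)^*\hat x_h+\int_\tau^h\Phi(t,\tau)^*C(t)^*\hat z(t)\,dt$ so that the adjoint identity holds; only afterwards does it differentiate this formula (using the identity $\tfrac{d}{dt}[\Phi(\tau,t)^*]=-A(t)^*\Phi(\tau,t)^*$) to recover the backward ODE in \eqref{eq:adjLGss}. You instead posit the backward ODE from the outset and verify the adjoint identity by integration by parts on $t\mapsto\langle\hat x(t),x(t)\rangle$, which bypasses the transition-matrix machinery entirely. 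Your route is shorter and more elementary; the paper's route has the minor advantage of producing the closed-form expression for $\hat x(\tau)$ along the way, but that formula is not used elsewhere.
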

\begin{proof} The proof can be found in the Appendix. \end{proof}
%

The following lemma provides a useful interpretation for the
Hamiltonian dynamics \eqref{eq:HColaneri}.

\begin{lemma}\label{lem:intercon} 
  Interconnect the dynamics of \eqref{eq:LGss} and \eqref{eq:adjLGss} with $\hat{z}(t):=z(t)$
  and $w(t):=\gamma^{-2} \hat{w}(t)$.  The resulting autonomous
  dynamics has state $\tilde{e}^T:=[ x^T \, \hat{x}^T ]$ with
  dynamics given by (suppressing dependence on $t$):
\begin{align}
  \label{eq:HColaneri2} 
  \dot{\tilde{e}} & = \tilde{H} \tilde{e}  \\
  \label{eq:w} 
  w & = (\gamma^{2} I - D^*D)^{-1} \smat{cc}  D^* C  & B^*\tams  \tilde{e} \\
  \label{eq:z} 
  z & = (\gamma^{2} I - DD^*)^{-1}  \smat{cc} \gamma^2 C & DB^*  \tams  \tilde{e} 
\end{align}
Moreover let $(\lambda,v)$ denote an eigenvalue/eigenvector of the
monodromy matrix $\tilde{Q}$ for (\ref{eq:HColaneri2}).  Partition
$v^*:=[v_1^* \, v_2^*]$ conformably with the state $\tilde{e}^T:=[ x^T
\, \hat{x}^T ]$. Then $\mathbf{T}_G(v_1,w) = (\lambda v_1,z)$ and
$\mathbf{T}_G^*(\lambda v_2,z) = (v_2, \gamma^{2} w)$.
\end{lemma}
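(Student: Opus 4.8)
\noindent\textit{Proof proposal.} The plan is to split the argument into two parts. First, an algebraic elimination step showing that the interconnection described in the statement really does produce the autonomous Hamiltonian dynamics \eqref{eq:HColaneri2} together with the output maps \eqref{eq:w}--\eqref{eq:z}. Second, a short dynamical argument that extracts the two operator identities from the eigenstructure of the monodromy matrix $\tilde{Q}$ together with Lemma~\ref{lem:innerprod}.

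For the first part I would work under the standing assumption (already used for the Riccati equation \eqref{eq:RicZ}) that $\gamma^{2}I - D(t)^{*}D(t)$, and hence also $\gamma^{2}I - D(t)D(t)^{*}$, is invertible for every $t\in[0,h)$, which makes the interconnection well posed. Imposing $\hat{z}=z$ and $w=\gamma^{-2}\hat{w}$ on \eqref{eq:LGss}--\eqref{eq:adjLGss} and substituting $\hat{w}=B^{*}\hat{x}+D^{*}\hat{z}$ and $z=Cx+Dw$, one is left with two pointwise linear relations in $(x,\hat{x},w,z)$; solving the first for $w$ gives $(\gamma^{2}I-D^{*}D)w = D^{*}Cx+B^{*}\hat{x}$, i.e. \eqref{eq:w}, and solving the second for $z$ gives $(\gamma^{2}I-DD^{*})z=\gamma^{2}Cx+DB^{*}\hat{x}$, i.e. \eqref{eq:z}. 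Inserting these expressions back into $\dot{x}=Ax+Bw$ and $\dot{\hat{x}}=-A^{*}\hat{x}-C^{*}z$ yields a closed, autonomous system $\dot{\tilde{e}}=M\tilde{e}$ with $\tilde{e}^{T}=[x^{T}\ \hat{x}^{T}]$; it then remains to check, block by block, that $M$ coincides with $\tilde{H}=T^{-1}HT$ as obtained from the definition of $H$ and $T=\left[\begin{smallmatrix}0 & I\\ \gamma I & 0\end{smallmatrix}\right]$. The only identification that is not immediate is the lower-right block, and it follows from the push-through identity $(\gamma^{2}I-DD^{*})^{-1}D=D(\gamma^{2}I-D^{*}D)^{-1}$.

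For the second part I would initialize the autonomous dynamics \eqref{eq:HColaneri2} at $\tilde{e}(0)=v$, so that $\tilde{e}(t)$ is the corresponding state-transition trajectory and $w$, $z$ are the signals \eqref{eq:w}, \eqref{eq:z} evaluated along this trajectory on $[0,h)$. With the partition $v^{T}=[v_{1}^{T}\ v_{2}^{T}]$ we get $x(0)=v_{1}$, $\hat{x}(0)=v_{2}$, while after one period the monodromy relation gives $\tilde{e}(h)=\Psi_{\tilde{H}}(0)v=\tilde{Q}v=\lambda v$, hence $x(h)=\lambda v_{1}$ and $\hat{x}(h)=\lambda v_{2}$. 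By construction the triple $(x,w,z)$ satisfies exactly the forward equations \eqref{eq:LGss} with initial state $v_{1}$ and terminal state $\lambda v_{1}$, which is precisely the assertion $\mbf{T}_{G}(v_{1},w)=(\lambda v_{1},z)$. Similarly the triple $(\hat{x},\hat{w},\hat{z})=(\hat{x},\gamma^{2}w,z)$ satisfies the adjoint equations \eqref{eq:adjLGss} with terminal state $\hat{x}_{h}=\lambda v_{2}$ and initial state $\hat{x}_{0}=v_{2}$, so Lemma~\ref{lem:innerprod} gives $\mbf{T}_{G}^{*}(\lambda v_{2},z)=(v_{2},\gamma^{2}w)$, completing the argument.

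Most of this is bookkeeping, and I expect the one genuinely delicate step to be the block-by-block identification of the eliminated closed loop with the specific generator $\tilde{H}$: it requires careful handling of the adjoint (transposed) matrices and of the push-through identity, and it is also where one must be explicit that, in the statement, "$w$" and "$z$" denote the restrictions to $[0,h)$ of the signals generated by the autonomous dynamics started from $v$, rather than arbitrary inputs.
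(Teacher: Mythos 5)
Your proof is correct and takes essentially the same route as the paper, whose own proof simply cites \cite{colaneri00_kybernetica} for the feedback-interconnection form of the Hamiltonian and observes that the operator identities follow from the definitions of $\mbf{T}_G$ and $\mbf{T}_G^*$ (i.e., from Lemma~\ref{lem:innerprod} together with the monodromy relation $\tilde{e}(h)=\tilde{Q}v=\lambda v$). Your explicit elimination, including the push-through identity needed to match the lower-right block of $\tilde{H}=T^{-1}HT$, correctly supplies the algebra that the paper delegates to the reference.
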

\begin{proof}
  The expression for the Hamiltonian dynamics as the feedback
  connection of \eqref{eq:LGss} and \eqref{eq:adjLGss} is from  \cite{colaneri00_kybernetica}.  The expressions for
  $\mathbf{T}_G$ and $\mathbf{T}_G^*$ follow from the definitions of
  the forward and adjoint operators. 
\end{proof}

In what follows we show that worst-case input can be constructed from
(\ref{eq:w}) if the Hamiltonian system (\ref{eq:HColaneri2}) is initialized
as $\tilde{e}(0) = v$. For this we need the following theorem, which
is the direct application of Theorem 21.12 in \cite{zhou96_roc} to the
discrete-time system $\underline{G}_\gamma$: 

\begin{theorem} The following statements are equivalent:
\begin{enumerate}[(a)]
\item $\|\underline{G}_\gamma\|_\infty<1$.
\item $\tilde Q$ has no eigenvalues on the unit circle and $\|\underline{C}_\gamma(I-\underline A_\gamma)^{-1}\underline B_\gamma\|<1$
\end{enumerate}
\end{theorem}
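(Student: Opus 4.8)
The plan is to recognise this statement as the discrete-time induced-norm test of Theorem~21.12 of \cite{zhou96_roc} specialised to the LTI system $\underline{G}_\gamma$, so that the only real work is to match the symplectic matrix of that theorem with $\tilde{Q}$. First I would record the data of $\underline{G}_\gamma$: by \eqref{eq:AgBgCg} it is a real, discrete-time LTI system with realisation $(\underline{A}_\gamma,\underline{B}_\gamma,\underline{C}_\gamma,0)$, so $\underline{G}_\gamma(z)=\underline{C}_\gamma(zI-\underline{A}_\gamma)^{-1}\underline{B}_\gamma$; its feedthrough term (the value at $z=\infty$) is $0$, and its value at the reference point $z=1$ is $\underline{C}_\gamma(I-\underline{A}_\gamma)^{-1}\underline{B}_\gamma$.

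Next I would apply Theorem~21.12 of \cite{zhou96_roc} with bound $1$. That result characterises $\|\underline{G}_\gamma\|_\infty<1$ (the $\mathcal{H}_\infty$ norm, so in particular $\underline{A}_\gamma$ Schur) by two conditions: (i) the $2n\times 2n$ symplectic matrix $M$ associated with the discrete algebraic Riccati equation built from $(\underline{A}_\gamma,\underline{B}_\gamma,\underline{C}_\gamma,0)$ at level $1$ has no eigenvalue on $\partial\mathbb{D}$; and (ii) a witness inequality that selects the sub-unity branch of the Riccati solution---equivalently, that excludes the pathology $\bar{\sigma}(\underline{G}_\gamma(e^{j\theta}))>1$ for every $\theta$ while $M$ still has no unit-circle eigenvalue---which, evaluated at $z=1$, reads $\|\underline{C}_\gamma(I-\underline{A}_\gamma)^{-1}\underline{B}_\gamma\|<1$. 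Since here the feedthrough term is $0$, the customary $\bar{\sigma}(D)<1$ portion of the test is vacuous, which is precisely why the nontrivial anchor condition (ii) must be kept.

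It then remains to identify $M$ with $\tilde{Q}$. Written out for $(\underline{A}_\gamma,\underline{B}_\gamma,\underline{C}_\gamma,0)$ and reduced to $2n\times 2n$ form (up to an inversion/transposition, both eigenvalue-reciprocal operations that leave $\partial\mathbb{D}$ invariant), $M$ has $(1,1)$, $(1,2)$, $(2,1)$ blocks equal to $\underline{A}_\gamma^{-T}$, $-\underline{A}_\gamma^{-T}\underline{C}_\gamma^{T}\underline{C}_\gamma$, and $\underline{B}_\gamma\underline{B}_\gamma^{T}\underline{A}_\gamma^{-T}$; imposing the symplectic relation $M^{T}JM=J$ then forces the $(2,2)$ block to equal $\underline{A}_\gamma-\underline{B}_\gamma\underline{B}_\gamma^{T}\underline{A}_\gamma^{-T}\underline{C}_\gamma^{T}\underline{C}_\gamma$, so $M$ is exactly the matrix $Q$ of \eqref{eq:QfromAgBgCg}. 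Since $\tilde{Q}=T^{-1}QT$ as observed earlier, $\tilde{Q}$ and $M$ are similar, hence cospectral; therefore condition (i) is equivalent to ``$\tilde{Q}$ has no eigenvalue on $\partial\mathbb{D}$'', and combining it with (ii) gives the claimed equivalence.

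I expect the third step to be the only delicate point: one must carefully reconcile the sign and transpose conventions of \cite{zhou96_roc} with those inherited here from \eqref{eq:AgBgCg&Q} (and keep track of whether the reduction produces $Q$ or its inverse), and one must confirm that stability of $\underline{A}_\gamma$ is either built into the hypothesis $\|\underline{G}_\gamma\|_\infty<1$ or else is recovered, in the converse direction, from (i)--(ii). Everything else is bookkeeping, since \eqref{eq:QfromAgBgCg} already packages the symplectic structure of $\tilde{Q}$ in exactly the shape that Theorem~21.12 of \cite{zhou96_roc} requires.
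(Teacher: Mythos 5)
Your proposal takes essentially the same route as the paper: the paper offers no separate proof, stating the theorem as the direct application of Theorem~21.12 of \cite{zhou96_roc} to the discrete-time system $\underline{G}_\gamma$, which is exactly your argument, including the identification of the associated symplectic matrix with $Q$ in \eqref{eq:QfromAgBgCg} and hence, via the similarity $\tilde Q = T^{-1}QT$, with $\tilde Q$. Your additional remarks on the $z=1$ anchor condition (needed since $D=0$ and the discrete frequency range is compact) and on the stability bookkeeping are consistent with that citation and fill in the details the paper leaves implicit.
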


It follows form Theorems 2.1 and 2.2 that if $\gamma \le
\|\mathbf{G}\|$ then $\tilde{Q}$ has a unit modulus eigenvalue
$e^{j\omega h}$ for some $\omega$.  Let $v$ denote the corresponding
eigenvector.  Let $\tilde{e}$ denote the solution of the
Hamiltonian dynamics (\ref{eq:HColaneri2}) and $(w,z)$ the corresponding
outputs with initial condition $e(0)=v$.  Then Lemma \ref{lem:intercon}
implies that $\mathbf{G}$ maps the input $w$ and initial
condition $v_1$ to the output $z$.  Moreover, it follows
from Lemma \ref{lem:intercon} that:
\begin{align}
  v_2^* v_1 + \int_0^h z^*(t) z(t) 
  & = \langle (e^{j\omega h} v_2,z), \mathbf{T}_G (v_1,w) \rangle \\
  & = \langle \mathbf{T}_G^*(e^{j\omega h} v_2,z), (v_1,w) \rangle \\
  & = v_2^* v_1 + \gamma^2 \int_0^h w^*(t) w(t) &
\end{align}
Thus $||z||_{[0,h)} = \gamma ||w||_{[0,h)}$, i.e. the $\mathcal{L}_2$
gain of $\mathbf{G}$ is equal to $\gamma$ on the interval $[0,h)$
provided the initial condition of the system $\mathbf{G}$ is given by
$x(0) = v_1$.

Note that the input/output pair $(w,z)$ for $t\in [0,h)$ is obtained
by integrating the Hamiltonian dynamics (\ref{eq:HColaneri2}) starting from
the initial condition $e(0)=v$. The state of the Hamiltonian system
after one period is given by $e(h) = \tilde{Q} e(0) = e^{j\omega h}
v$.  Thus integrating the periodic Hamiltonian dynamics forward over
the next interval yields $w(t) = e^{jwh}w(t-h)$ and $z(t) =
e^{jwh}z(t-h)$ for $t\in [h,2h)$.  Continuing to evolve the
Hamiltonian dynamics forward in time over subsequent periodic
intervals yields an input/output pair of $\mathbf{G}$ such that
for any interval $k\in \{0,1,2,\ldots \}$:
\begin{align}
\label{eq:wWC} 
  w(t) & = e^{j\omega kh} w(t-kh), \,\,\, t\in [kh,(k+1)h) \\
\label{eq:zWC} 
  z(t) & = e^{j\omega kh} z(t-kh) 
\end{align}
This input/output pair satisfies $\|z\|_{[kh,(k+1)h)} = \gamma
\|w\|_{[kh,(k+1)h)}$ over each interval $[kh,(k+1)h)$.  Thus $w$ is,
loosely speaking, an input that achieves the gain $\gamma$.  There are
three issues to be resolved to make this more precise.  First, the
input (\ref{eq:wWC}) is persistent and hence is not in
$\mathcal{L}_{2,[0,\infty)}$.  Second, the signals $(w,z)$ in
(\ref{eq:wWC}-\ref{eq:zWC}) are an input/output pair of $\mathbf{G}$
only if the periodic system starts with the initial condition
$x(0)=v_1$. However, the induced gain is defined assuming $x(0)=0$.
Third, the input can be complex if the eigenvector $v$ is complex.

The first two issues are resolved by noting that $(w,z)$ can be
expressed using the state transition matrix:
\begin{align}
\nonumber
z(t) = C(t) \Phi(t,0) v_1 
      & + \int_0^t C(t) \Phi(t,\tau)B(\tau)w(\tau) \, d\tau   + D(t) w(t)
\end{align}
Define $w^\bullet$ as the truncation of $w$ after $K$ periodic
intervals, i.e. $w^\bullet(t) := w(t)$ for $t\in[0,Kh)$ and
$w^\bullet(t):=0$ otherwise. Define $z^\bullet$ as the output of
$\mathbf{G}$ driven by input $w^\bullet$ but with initial condition
$x(0)=0$. By causality, $z^\bullet(t)=z(t)-s(t)$ for
$t<Kh$ where $s(t):=C(t) \Phi(t,0) v_1$ is the initial condition
response.  The norm of $z^\bullet$ can be bounded as:
\begin{align*}
  \| z^\bullet \| &  \ge \| z \|_{[0,Kh)} - \| s \|_{[0,Kh)} \\
                & = \gamma \| w^\bullet \|_{[0,Kh)} - \| s \|_{[0,Kh)} 
\end{align*}
The first line follows from the triangle inequality.  The second line
follows from two facts. First, $\|z\|_{[kh,(k+1)h)} = \gamma
\|w\|_{[kh,(k+1)h)}$ over each interval as noted above.  Second,
$w^\bullet=w$ by construction for $t\in [0,Kh)$.  
Next note that $ \| s \|_{[0,Kh)} \le \| s \| < \infty$ due
to the stability of $\mathbf{G}$.  In addition,
$\| w^\bullet \|_{[0,Kh)} = K \| w^\bullet \|_{[0,h)}$ and hence
$\| w^\bullet \|_{[0,Kh)} \rightarrow \infty$ as $K\rightarrow \infty$.
Thus it is clear that for all $\epsilon >0$ there exists an integer
$K$ such that the input-output pair $(w^\bullet,z^\bullet)$ of
$\mathbf{G}$ satisfies $\frac{\|z^\bullet\|}{\|w^\bullet\|} \ge \gamma - \epsilon$.
The input $w^\bullet$ is a valid ``worst-case'' signal because
it is in $\mathcal{L}_2$ and the output has been generated with
zero initial conditions.

The remaining issue is the fact that $(w^\bullet,z^\bullet)$ may be
complex. The periodic system $\mathbf{G}$ is linear and the system
matrices are real. Thus neglecting the imaginary part of the complex
valued input does not change the gain, i.e. the ratio of the norms of
the real input and the corresponding real output remains the
same. Therefore the real valued worst-case input can be obtained
by defining $w^\circ = Re(w^\bullet)$. Now we can summarize the complete
algorithm.


\begin{algorithm} [Worst-case input]\label{alg:wcinput} $ $\\
\vspace*{-0.3cm}
\begin{algorithmic}[1]
\STATE Let $\underline\gamma$ be given such that $\underline\gamma\leq\|\mbf{G}\|$.  Set $\gamma:=\underline\gamma$.
\STATE Compute $Q$ from $A_\gamma,B_\gamma,C_\gamma$ using the equation \eqref{eq:QfromAgBgCg}. Determine $\tilde Q$ by applying the similarity transformation $T$.

\STATE Compute the unit modulus eigenvalue $e^{j\omega h}$ of $\tilde Q$ and
  the corresponding eigenvector $v=\smat{cc} \overline x^* & \overline{\hat x}^* \tams^*$.

\STATE Integrate  the Hamiltonian dynamics \eqref{eq:HColaneri} from $t=0$ to $t=h$ starting from the initial condition $\tilde{e}(0)=v$.
\STATE Use \eqref{eq:w} to compute $w(t)$ from the state trajectories $x(t),\hat x(t)$ over $[0,h)$. 
\STATE Choose an integer $K\gg 1$ and define the (possibly complex valued)
  signal $w^\bullet(t)$ by:
  \begin{align*}
    w^\bullet(t)=
    \begin{cases}
     e^{j\omega kh}w(t-kh)   & \text{ if } t\in[kh,(k+1)h),\\
     &~~~k=0,1,\ldots,K-1 \\
          0 & \text{ if } t \ge Kh
    \end{cases}
  \end{align*} 
\STATE Let $w^\circ :=\text{Re}(w^\bullet)$ and $z^\circ := \mbf{G} w^\circ$.
\end{algorithmic}
\end{algorithm}

\section{Computation of lower bounds for the induced $\mcal{L}_2$ norm of LPV systems }\label{sec:lpvL2norm}
\subsection{Problem formulation}
Linear parameter varying (LPV) systems are a class of systems whose state space matrices depend on a time-varying parameter vector  $\rho$.  We assume that $\rho:\mathbb{R}^+ \rightarrow  \mathbb{R}^{m}$
is a  piecewise continuously differentiable function of time and satisfies the known bounds:
\begin{align}
\underline{\rho_i}\leq \rho_i(t) \leq \overline{\rho}_i,~~\underline{\mu_i}\leq \dot\rho_i(t) \leq \overline{\mu}_i,~1\leq i \leq m
\label{eq:rhobnds}
\end{align}
The set of allowable parameter vectors, denoted $\mathcal{P} \subseteq \mbb{R}^m$ consists of vectors $\rho$ that satisfy the range bounds given in \eqref{eq:rhobnds}. The set of admissible trajectories, denoted $\mathcal{A}$, consists of piecewise continuously differentiable trajectories that satisfy both the rate and range bounds given also in \eqref{eq:rhobnds}.

The state-space matrices of an LPV system are continuous functions of the parameter:  $A:
\mathcal{P} \rightarrow \mathbb{R}^{n \times n}$, $B: \mathcal{P} \rightarrow
\mathbb{R}^{n \times p}$, $C: \mathcal{P} \rightarrow \mathbb{R}^{q \times n}$ and
$D: \mathcal{P} \rightarrow \mathbb{R}^{q \times p}$.  An $n^{\textrm{th}}$ order LPV system, $G$, is defined by the following state-space form:
\begin{equation}
    \label{eq:lpv}
    \begin{split}
      \dot{x}(t) &= A(\rho(t)) x(t) + B(\rho(t))w(t) \\
      z(t) &= C(\rho(t)) x(t) + D(\rho(t))w(t)
    \end{split}
  \end{equation}
The performance of an LPV system $G$ can be specified in terms of its induced $\mathcal{L}_2$ gain from input $w$ to output $z$. The induced $\mathcal{L}_2$-norm is defined by
  \begin{equation}
    \|G\| = \sup_{0\neq w \in
      \mathcal{L}_2(\mbb{R}^p), \rho(\cdot) \in \mathcal{A}} \frac{ \|z\| }{ \|w\| },
      \label{eq:lpvL2gain}
  \end{equation}
The initial condition is assumed to be $x(0) = 0$. The notation $\rho(\cdot) \in \mathcal{A}$ refers to the entire (admissible) trajectory as a function of time. The class of LPV system given above has an arbitrary dependence on the parameter. For this class of systems there are known linear matrix inequality (LMI) conditions to efficiently compute an upper bound on the induced $\mcal{L}_2$ gain \cite{fenwu01_ijc}, \cite{fenwu95_phd}.

Now we address the complementary problem, i.e. our aim is to determine a lower bound for the induced  $\mcal{L}_2$ norm by using the results of the previous section.  

\subsection{Lower bound for the induced $\mcal{L}_2$ norm}
The computation of the lower bound is based on restricting the scheduling parameter trajectories to a finite-dimensional set of periodic signals. Let $\rho: \mbb{R}^+ \times \mbb{R}^N \rightarrow \mbb{R}^{m}$ denote a function that specifies a periodic scheduling trajectory for each value of $c\in \mbb{R}^N$.  $\rho(\cdot,c)$ denotes the entire trajectory (as a function of time) at the particular value $c$ and $\rho(t,c)$ denotes the $m$-dimensional scheduling vector obtained by evaluating $\rho(\cdot,c)$ at time instant $t$.  The trajectory is assumed to be periodic, i.e. for each $c$ there is a period $h(c)$ such that $\rho(t+h(c),c)=\rho(t,c)$ $\forall t$.  In addition, we must ensure the trajectory is admissible in the sense that it satisfies the range and rate bounds in \eqref{eq:rhobnds}. Let $\mathcal{C}_p \subseteq \mbb{R}^N$ denote the set of values that lead to such admissible trajectories, i.e. $\rho(\cdot,c) \in \mcal{A}$ for all $c\in \mcal{C}_p$.  The corresponding set of periodic trajectories is defined as
\begin{align}
  \mathcal{A}_p := \left\{ \rho(\cdot,c) \, | \, c \in \mathcal{C}_p \right\}
\end{align}
As a concrete example, $\rho$ can be specified as a linear combination of periodic bases functions
$\{\phi_k\}_{k=1}^N$, i.e.  $\rho(\cdot,c) := \sum_{k=1}^N c_k \phi_k(\cdot)$. Section \ref{sec:trajparam} provides alternative characterizations to specify periodic trajectories. 

Define the lower bound $\gamma_{lb}$ on the induced $\mathcal{L}_2$ gain \eqref{eq:lpvL2gain} as
\begin{align}
\label{eq:gammalb}
\gamma_{lb}:= \sup_{\rho(\cdot,c) \in \mathcal{A}_p} \| \mbf{G}_{\rho(\cdot,c)} \|=\sup_{c \in \mathcal{C}_p} \| \mbf{G}_{\rho(\cdot,c)} \|
\end{align}
where $\mathbf{G}_{\rho(\cdot,c)}$ denotes the periodic system (operator) obtained by evaluating the LPV system $G$ along the periodic trajectory specified by $\rho(\cdot,c)$.  The algorithm described in \cite{cantoni09_automatica} can thus be used to evaluate the gain $\| \mbf{G}_{\rho(\cdot,c)}\|$.  It follows immediately from $\mathcal{A}_p \subset \mathcal{A}$ that $\gamma_{lb} \le \|G\|$.  Equation \eqref{eq:gammalb} defines a finite dimensional optimization problem, which is non-convex in general. 
One further issue is that a single, accurate evaluation of $\| \mbf{G}_{\rho(\cdot,c)} \|$ requires many bisection steps and, as a consequence the matrix differential equations (Equations \eqref{eq:RicX}, \eqref{eq:RicY} and \eqref{eq:RicZ}) must be integrated many times for a single evaluation of the objective function.   Thus the evaluation of $\|\mbf{G}_{\rho(\cdot,c)}\|$ is computationally costly. A significant reduction in computation time can be achieved by using Algorithm \ref{alg:L2lowbnd-I} described below.  In this algorithm, $\underline{G}_{\gamma,\rho(\cdot,c)}$ denotes the discrete-time system \eqref{eq:AgBgCg} corresponding to the PLTV operator $\mathbf{G}_{\rho(\cdot,c)}$.  Moreover, define $\nu(c,\gamma):=\| \underline{G}_{\gamma,\rho(\cdot,c}\|$. With this notation, $\| G_{\rho(\cdot,c)}\| < \gamma$ if and only if $\nu(c,\gamma)<1$. The lower bound algorithm can now be stated.

\begin{algorithm}[Lower bound on $\mcal{L}_2$ norm-I] \label{alg:L2lowbnd-I}  $ $\\
\vspace*{-0.4cm}
\begin{algorithmic}[1]
\STATE Pick an initial parameter vector $c_0$. Let $c:=c_0$. 
\STATE Compute the norm of $\mbf{G}_{\rho(\cdot,c_0)}$ by using the algorithm in \cite{cantoni09_automatica}. Take the upper bound $\overline\gamma$ from the  bisection. It is clear that $\nu(c_0,\bar\gamma)<1$ because $\bar\gamma$ is an upper bound for $\|\mbf{G}_{\rho(\cdot,c_0)}\|$. 
\STATE  Starting from the initial value $c_0$, solve the  nonlinear optimization problem
\begin{align}
\sup_{c\in\mcal{C}_p} \nu(c,\bar\gamma)
\label{eq:maxnu1}
\end{align}
Three different outcomes are possible: \eqref{eq:maxnu1} terminates at a (local) optimum $c^*$ where $\nu(c^*,\gamma)<1$  {\it(case-a)}, $\nu(c^*,\gamma)\ge1$ {\it(case-b)},  or the optimization stops because at some $c=c^*$  the solution $Z(t)$ of \eqref{eq:RicZ} goes unbounded {\it(case-c)}.  Clearly, {\it case-b} and {\it case-c} mean that $\bar\gamma$ is smaller than $\|\mbf{G}_{\rho(\cdot,c^*)}\|$  so the optimization managed to improve the lower bound.  Therefore, set $c_0:=c^*$ and go to step 2. In {\it case-a} the optimization above was unsuccessful in the sense that the norm was not significantly improved. In this case, go to step 4.
\STATE  Compute the norm of $\mbf{G}_{\rho(\cdot,c^*)}$ and take the lower bound  $\underline\gamma$ from the  bisection. Let $\gamma_{lb}:=\underline\gamma$ and stop.
\end{algorithmic}
\end{algorithm}
The advantages of Algorithm \ref{alg:L2lowbnd-I} over the direct maximization of $\|\mbf{G}_{\rho(\cdot,c)}\|$ are the following: the bisection has to be performed less times (only once before each optimization step \eqref{eq:maxnu1} and then once at the end of the procedure) and the computation of $\nu$  requires only a single integration of the matrix differential equations, i.e.   it can be evaluated with significantly less computation than $\|\mbf{G}_{\rho(\cdot,c)}\|$. Testing the algorithm on numerical examples,  we found in most cases that only a few or even just one optimization step is enough to get a good approximation for the lower bound. The problem is often with the solution of the Riccati equation \eqref{eq:RicZ}, which goes unbounded when the norm of $\mbf{G}_{\rho(\cdot,c)}$  becomes significantly larger than $\bar\gamma$  at some $c=\hat c$. If this happens, a new $\bar\gamma$ has to be computed by performing the time-consuming bisection algorithm. This situation  can be avoided if \eqref{eq:maxnu1} is started with some $\bar\gamma\gg \|\mbf{G}_{\rho(\cdot,c_0)}\|$. Clearly if $\bar\gamma$ is chosen to be an upper bound of the $\mcal{L}_2$ norm of the LPV system, then $\|\mbf{G}_{\rho(\cdot,c)}\|\leq \|G\|\leq\bar\gamma$ for all $c$, which implies that $Z(t)$ never goes unbounded.
The modified, 1-step algorithm is summarized as follows:

\begin{algorithm}[Lower bound on $\mcal{L}_2$ norm-II] \label{alg:L2lowbnd-II}$ $\\
\vspace*{-0.4cm}
\begin{algorithmic}[1]
\STATE Compute an upper bound  $\gamma_{ub}$  on the gain of the LPV system, i.e. $\gamma_{ub} \ge \| G\|$.  Such an upper bound can be determined by using standard methods based on dissipativity relation (e.g. Bounded Real  type LMI conditions) \cite{fenwu01_ijc}, \cite{fenwu95_phd}. Let $\bar\gamma:=\gamma_{ub}$. Pick an initial parameter vector $c_0$.
\STATE Solve the optimization problem \eqref{eq:maxnu1}. 
\STATE Compute the norm of $\mbf{G}_{\rho(\cdot,c^*)}$ and take $\underline\gamma$ from the outputs of the bisection. Let $\gamma_{lb}:=\underline\gamma$ and stop.
\end{algorithmic}
\end{algorithm}
This algorithm only requires the $\mcal{L}_2$ nor of the periodic system to be evaluated at the last step.  This greatly reduces the number of required integrations for the matrix differential equations associated with X, Y, and Z. By performing either of the algorithms above we obtain a lower bound $\gamma_{lb}$ for  the induced $\mcal{L}_2$ gain and the worst-case scheduling trajectory $\rho(\cdot,c^*)$, where the associated PLTV system takes this norm. By using Algorithm \ref{alg:wcinput} we can also construct a worst-case input signal $w^o$ for $\mbf{G}_{\rho(\cdot,c^*)}$.  This, together with $\rho(\cdot,c^*)$ gives the (worst-case input, worst-case scheduling trajectory) pair, where the $\mcal{L}_2$-gain of the LPV system is $\gamma_{lb}$. 

\subsection{Scheduling trajectories}\label{sec:trajparam}

Algorithms  \ref{alg:L2lowbnd-I} and \ref{alg:L2lowbnd-II} optimize the $\mcal{L}_2$  bound over the elements of $\mcal{A}_p$. Therefore it is  important how this set is characterized, i.e. how the periodic signals depend on the parameter vector $c$. The structure of $\rho(\cdot,c)$  influences the convergence properties of \eqref{eq:maxnu1} and determines the final result. Therefore it has to be carefully chosen.  There are many ways to construct periodic signals. In this section we focus on piecewise linear scheduling trajectories, while another construction method, based on sinusoidal basis functions, is described in \cite{peni14_accinprep}.  Both methods result in a polytopic parameter set, i.e. $\mcal{C}_p$ is defined by linear inequality constraints. The linearity of the constraints simplifies the optimization \eqref{eq:maxnu1}.

Let the scheduling trajectories be specified by restricting the time derivative $\dot{\rho}$ to be piecewise constant.  Specifically, consider the one-dimensional case ($n_\rho=1$) where the time interval $[0,h]$ is subdivided into $R$ intervals $[T_{j-1}, T_j)$ for $j=1,\ldots,R$ with $T_0:=0$ and $T_R=h$.  Let $c_j := T_j - T_{j-1}$ denote the length of the $j^{th}$ sub-interval.  The parameter trajectory is then given as the periodic function of period $h$ defined for $t\in [0,h]$ by:
\begin{align}
\rho(0,c) & := \rho_0 \\
\label{eq:rhodot}
\dot{\rho}(t,c) & :=r_j  \mbox{ for } t \in [T_{j-1}, T_j)
\end{align}
where $r_j \in [\underline{\mu},\bar{\mu}]$ are user-selected values that specify the rate on each interval.  The vector $c:=[\rho_0,c_1, \ldots, c_R] \in \mbb{R}^{R+1}$ specifies the parameter trajectory offset $\rho_0$ and the subinterval lengths.  Equation~\eqref{eq:rhodot} can be integrated to give the explicit form of the trajectory as:
\begin{align}
  \rho(t,c) & = \rho_0 + \sum_{i=1}^{j-1} c_i r_i + (t-T_{j-1}) r_j ~~    \mbox{ for } t \in [T_{j-1}, T_j)
\end{align}
Note that the period $h=c_1+ \cdots + c_R$ is a free parameter as it depends on the interval lengths.  Moreover, $\rho(t,c)$ is a linear function of the offset $\rho_0$ and interval lengths $c_j$.  Thus the magnitude constraints in \eqref{eq:rhobnds} can be transformed into $R+1$ linear constraints on $c$:
\begin{align}
  \underline{\rho} \le \rho_0 + \sum_{i=1}^{j} c_i r_i  \le \bar{\rho}, ~~~~~  j \in \{ 0,\ldots,R\} 
  \label{eq:trajconstr}
\end{align}
To ensure $\rho(0,c)=\rho(h,c)$, \eqref{eq:trajconstr} has to be completed with an equality condition  $\sum_{i=1}^R c_ir_i=0$. Since all constraints are linear, thus $\mathcal{C}_p$ is a polytope.

\begin{remark}
The construction of the worst-case input requires additional considerations.  In particular, Algorithm  \ref{alg:wcinput}  for constructing the worst-case input requires integrating the unstable Hamiltonian system over $[0,h)$.  This causes numerical issues if the period length $h$ is "too large".  Thus it is reasonable to introduce an inequality constraint of the form $h\le \bar{h}$ to bound the period length when performing the optimization (24). This is simply another linear constraint on the vector $c$.
\end{remark}

\section{Numerical examples}\label{sec:examples}
In this section three numerical examples are presented to demonstrate the applicability of the proposed methods. To initialize our algorithms we need to determine an upper bound $\gamma_{ub}$ for the induced $\mcal{L}_2$ gain. In all examples $\gamma_{ub}$ is computed by solving the following optimization problem:
\begin{align}
\min_{V(x,\rho)} \gamma \nonumber \\
V(x,\rho)>0, \dot{V}(x,\rho,\dot\rho)\leq \gamma^2w^Tw-z^Tz
\label{eq:griddedL2gain}
\end{align}
where the Lyapunov (storage) function $V(x,\rho)$ was chosen to be quadratic: $V(x,\rho)=x^TP(\rho)x$, $P(\rho)=P(\rho,P_0,P_1,\ldots,P_M)$, where $P_i$-s denote the free (matrix) variables to be found. The infinite LMI constraints obtained were transformed to a finite set by choosing a suitable dense grid over the parameter domain $\mcal{P}$ and only the inequalities evaluated at the grid points are considered \cite{fenwu96_ijrnc}. 

Our strategy to compute the lower bound on the $\mcal{L}_2$ gain was the following: first we performed Algorithm \ref{alg:L2lowbnd-II} and then we used the parameter trajectory and lower bound we obtained as an initial value to run Algorithm \ref{alg:L2lowbnd-I}. Since in all numerical examples this second optimization  cannot improve the previous results, thus the numerical results we described in the forthcoming sections are generated by  Algorithm \ref{alg:L2lowbnd-II}. The nonlinear optimization was performed under MATLAB, by using the pattern search algorithm implemented in the Global Optimization Toolbox.

\subsection{LPV system with gain-scheduled PI controller}\label{sec:harald}
The first example, taken from \cite{pfifer14_ijrnc} is a feedback interconnection of a first-order LPV system with a gain-scheduled proportional-integral controller. The state-space matrices of the closed-loop system are as follows
\begin{align}
&A(\rho):=\smat{cc}-\frac{1}{\tau(\rho)}(1+K_p(\rho)K(\rho)) & \frac{1}{\tau(\rho)}\\ -K_i(\rho)K(\rho) & 0\tams,  
&&B(\rho):=\smat{c} \frac{1}{\tau(\rho)}K_p(\rho) \\ K_i(\rho) \tams, \nonumber\\
&C(\rho):=[-K(\rho)~0], &&D:=1
\label{eq:haraldscloop}
\end{align} 
where $\tau(\rho):=\sqrt{13.6-16.8\rho}$, $K(\rho):=\sqrt{4.8\rho-8.6}$ and 
\begin{align*}
K_p(\rho)=\frac{2\xi_{cl}\omega_{cl}\tau(\rho)-1}{K(\rho)},~~
K_i(\rho)=\frac{\omega_{cl}^2\tau(\rho)}{K(\rho)} ,~~\xi_{cl}=0.7,~~\omega_{cl}=0.25.
\end{align*}
The scheduling parameter is assumed to vary in the interval $[2,7]$ and $\dot{\rho}\in [-1,~1]$. 
\noindent By performing the optimization \eqref{eq:griddedL2gain} with 
\begin{align*}
V(x,\rho)=x^T\left(P_0+\sum_{k=1}^6 \rho^k P_k+\frac{1}{\rho}P_7+\frac{1}{\rho^2}P_8+\frac{1}{\rho^3}P_9\right)x\\
\Gamma:=\{\rho_1=2,\ldots,\rho_{100}=7\},~~\rho_{k+1}-\rho_k=5/99
\end{align*} 
we got $\gamma_{ub}=2.964$ for the upper bound. Using the parameter values in $\Gamma$ the frozen lower bound was also computed as $\gamma_{lb,fr}=\max_k \|G_{\rho_k}\|$, $\rho_k\in\Gamma$, where $G_{\rho_k}$ denotes the LTI system obtained by substituting $\rho(t)=\rho_k$ for all $t$.  The lower bound we obtained is $\gamma_{lb,fr}=1.1066$. To compute the lower bound by Algorithm \ref{alg:L2lowbnd-II}, we chose the following pattern for the rate variation of the scheduling trajectory  $r=[\overline\mu~0~\underline\mu~0~\overline\mu~0]=[1~0~-1~0~1~0]$ and let the algorithm tune the period in the interval $[h_{min}~h_{max}]=[12~20]$. The lower bound we obtained is $\gamma_{lb,pwl}=2.84$, at $h=15.877$. The scheduling trajectory can be seen in Fig. \ref{fig:haraldswc}. This lower bound is is significantly larger than $\gamma_{lb,\text{\cite{peni14_accinprep}}}$, where  $\gamma_{lb,\text{\cite{peni14_accinprep}}}$ is the lower bound we obtained in \cite{peni14_accinprep} by using sinusoid scheduling trajectories. Moreover, $\gamma_{lb,pwl}$ is very close to $\gamma_{ub}$: the difference  is only $\gamma_{ub}-\gamma_{lb,pwl}=0.13$. This means that we have very tight bounds on the norm of $G$: $2.84=\gamma_{lb,pwl}\leq\|G\|\leq \gamma_{ub}=2.964$.  

Finally, by using Algorithm  \ref{alg:wcinput} we computed also the worst case input for both scheduling trajectories. The input signals were constructed starting from the unit modulus eigenvalue $-1$. The parameter $K$ in Algorithm \ref{alg:wcinput} was  chosen to be $K=60$. The result can be seen in Fig. \ref{fig:haraldswc}.

\begin{figure}
\begin{center}
\includegraphics[width=12cm]{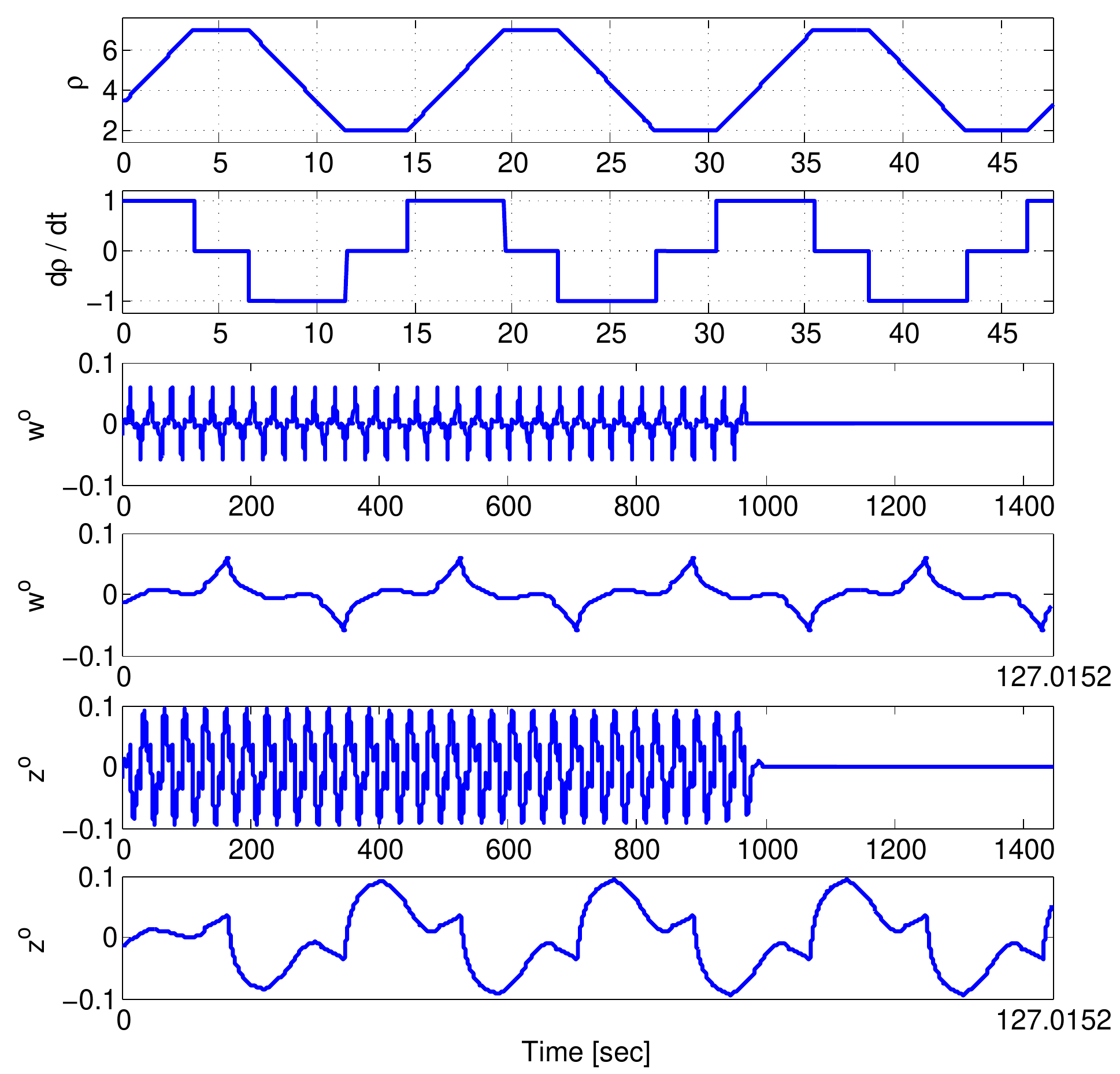}
\end{center}
\vspace*{-0.4cm}
\caption{Worst case scheduling trajectories and worst-case input/output for the LPV system in Section \ref{sec:harald}. The first two figures show the trajectory of the worst-case scheduling parameter $\rho(t)$ and its time derivative $\dot\rho(t)$.  In the 3rd and 4th figures $w^o$ is depicted over a longer and a shorter ($[0~8h]$) time intervals. In the 5th and 6th figures $z^o$ is displayed over the same time intervals. }
\label{fig:haraldswc}
\end{figure}

\subsection{Input and output scaled LTI system} \label{sec:catws}
The next example was constructed by taking two copies of the simple LTI system $1/(s+1)$, scaling the input of the first and he output of the second and computing the difference of the two outputs. The dynamics are given as follows:
\begin{align*}
\smat{c}\dot{x}_1\\ \dot{x}_2\tams &=\smat{cc}-1&0\\0&-1\tams \smat{c}x_1\\ x_2\tams+\smat{c}1\\\rho\tams w \\
z&=\smat{cc}\rho & -1\tams \smat{c}x_1\\ x_2\tams
\end{align*} 
We assume that $\rho\in [-1~~1]$ and $\dot\rho\in[-\overline\mu~~\overline\mu]$. We are going to compute the lower bound for different values of $\overline\mu$. It follows from the structure of the system that if $\rho$ is constant then the difference between the input and the output scaled systems is 0. This implies that $\gamma_{lb,fr}=0$. Next, the upper bound on the $\mcal{L}_2$ gain was computed for different rate bounds. In all cases the storage function $V(x,\rho)$ and the parameter grid $\Gamma$ was chosen as follows:
\begin{align*}
V(x,\rho)&=x^T\left(P_0+\sum_{k=1}^{10} \rho^k P_k\right)x\\
\Gamma&=\{\rho_1=-1,\ldots,\rho_{100}=1\},~~\rho_{k+1}-\rho_k=2/99
\end{align*}
The upper bounds obtained are collected in Table \ref{tab:catwsresults}.  To compute the lower bound we used the following rate pattern to characterize the piecewise linear scheduling trajectories: $r=[\overline\mu~0~-\overline\mu~0~\overline\mu~0]$. The lower bounds and the period of the worst-case scheduling trajectories obtained for the different $\overline\mu$ values are collected in Table \ref{tab:catwsresults}.  The range bounds $h_{\min}$ and $h_{\max}$ used to constrain the period length in \eqref{eq:maxnu1} are also given in Table \ref{tab:catwsresults}. It can be seen that the upper and lower bounds are very close to each other, which means that by using the upper and lower bound algorithms we could precisely compute the norm of this LPV system. In two particular cases, $\overline\mu=1$ and $\overline\mu=2$, we plotted also the worst case scheduling trajectory and the worst-case input in Fig. \ref{fig:catwswc}.  (The worst case inputs were computed starting from the unit modulus eigenvalues $[0.9723 + 0.2337i]$(in case $\overline\mu=1$) and $[0.9704 + 0.2415i]$ (in case $\overline\mu=2$). $K$ was 60 in both cases.) 


\begin{table}[h!]
\begin{align*}
\emat{c|c|c|c|c|c|c|c|c|c|c|c|c}
\overline\mu & \gamma_{lb} & \gamma_{ub} & \gamma_{ub}-\gamma_{lb}&[h_{min}~h_{max}] & h \\\hline
0.1 & 0.0979 & 0.1087 & 0.0108 & [0.5~50] & 49.9 \\
0.4 & 0.3309 & 0.3342 & 0.0033 & [0.5~50] & 44.2 \\
0.7 & 0.4783 & 0.4805 & 0.0022 & [0.5~20] & 19.78  \\
1 & 0.5645 & 0.5766 & 0.0121 & [0.5~6] & 6\\
1.3 & 0.6364 & 0.6435 & 0.0071 & [0.5~6] & 6 \\
1.6 & 0.6874 & 0.6924 &0.0050 & [0.5~6] & 6 \\
2 & 0.7347 & 0.7403 & 0.0056 & [0.5~6] & 6
\tame
\end{align*}
\caption{Upper and lower bounds on the $\mcal{L}_2$ norm of the LPV system defined in Section \ref{sec:catws}}
\label{tab:catwsresults}
\end{table}

\begin{figure}[h!]
\begin{center}
\includegraphics[width=12cm]{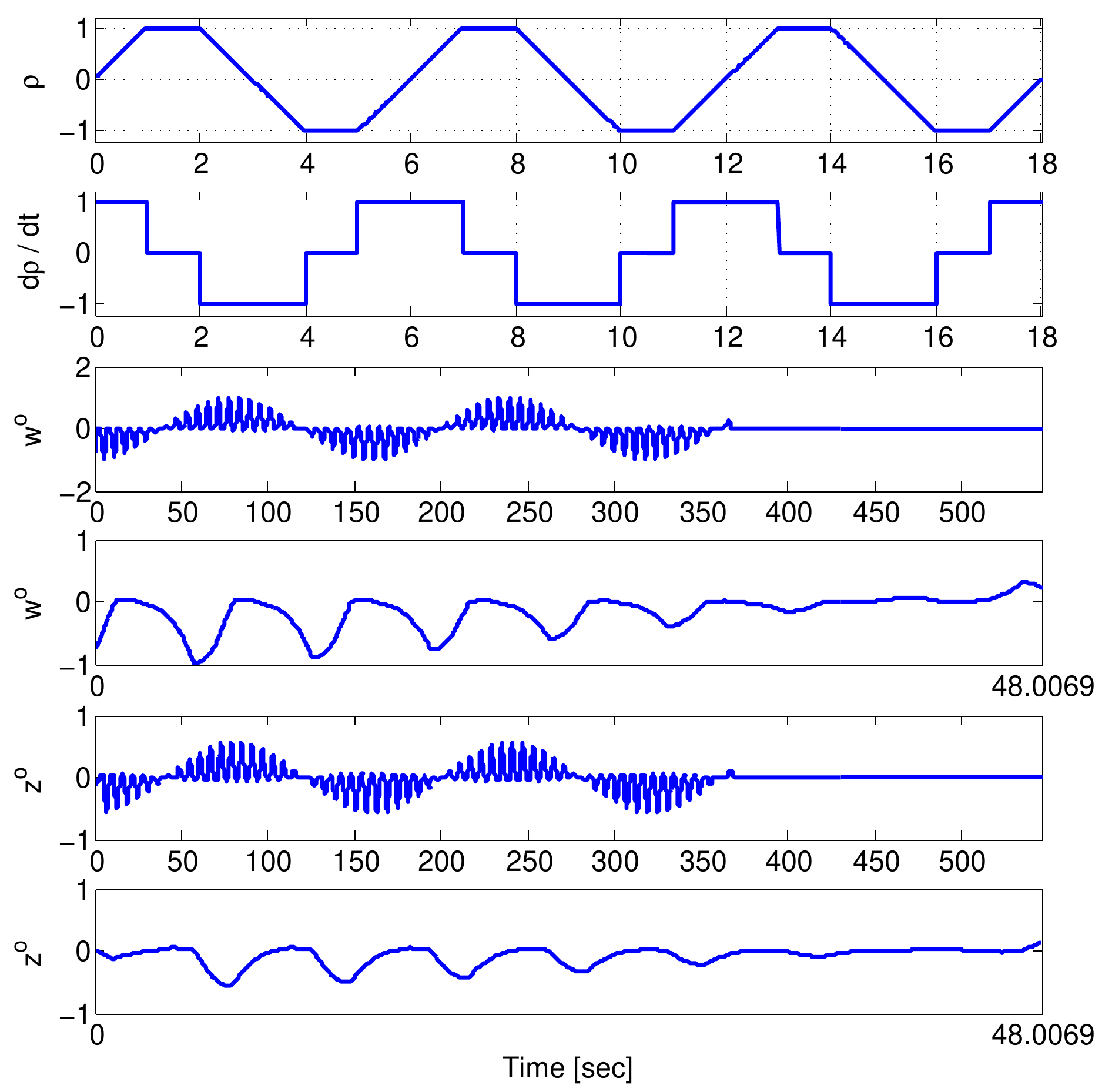}
\end{center}
\vspace*{-0.4cm}
\caption{Worst case scheduling trajectories and worst-case inputs/outputs at rate bounds $\overline\mu=1$ for the example in Section \ref{sec:catws}. The first two subfigures show the trajectory of the worst-case scheduling parameter $\rho(t)$ and its time derivative $\dot\rho(t)$.  In the 3rd and 4th subfigures $w^o$ is depicted over a longer and a shorter ($[0~8h]$) time interval. In the 5th and 6th figures $z^o$ is displayed over the same time intervals. }
\label{fig:catwswc}
\end{figure}

\begin{figure}[h!]
\begin{center}
\includegraphics[width=12cm]{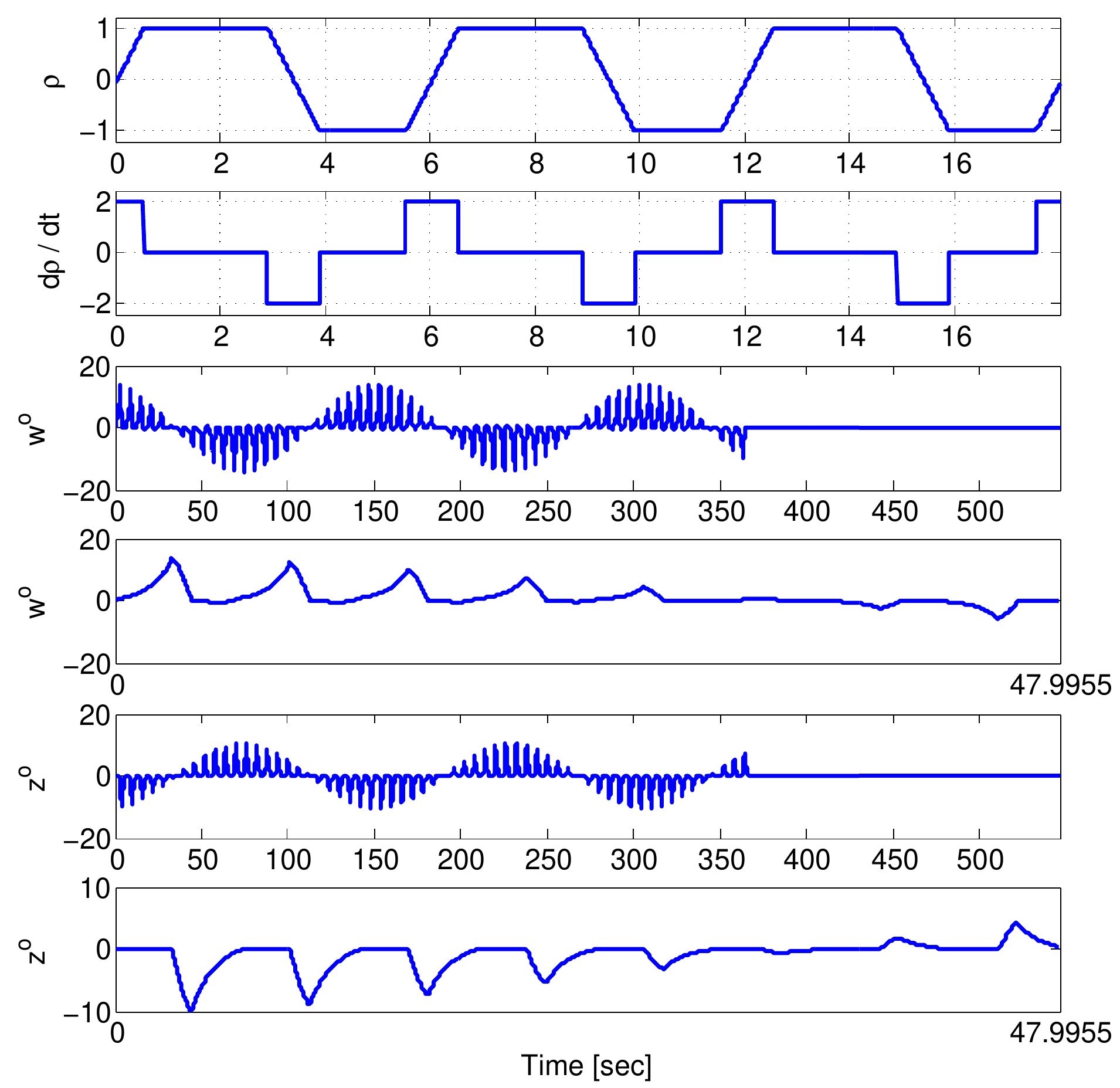}
\end{center}
\vspace*{-0.4cm}
\caption{Worst case scheduling trajectories and worst-case inputs/outputs at rate bounds $\overline\mu=2$  for the example in Section \ref{sec:catws}. The first two subfigures show the trajectory of the worst-case scheduling parameter $\rho(t)$ and its time derivative $\dot\rho(t)$.  In the 3rd and 4th subfigures $w^o$ is depicted over a longer and a shorter ($[0~8h]$) time interval. In the 5th and 6th figures $z^o$ is displayed over the same time intervals. }
\label{fig:catwswc}
\end{figure}

\begin{figure}[h!]
\begin{center}
\includegraphics[width=10cm]{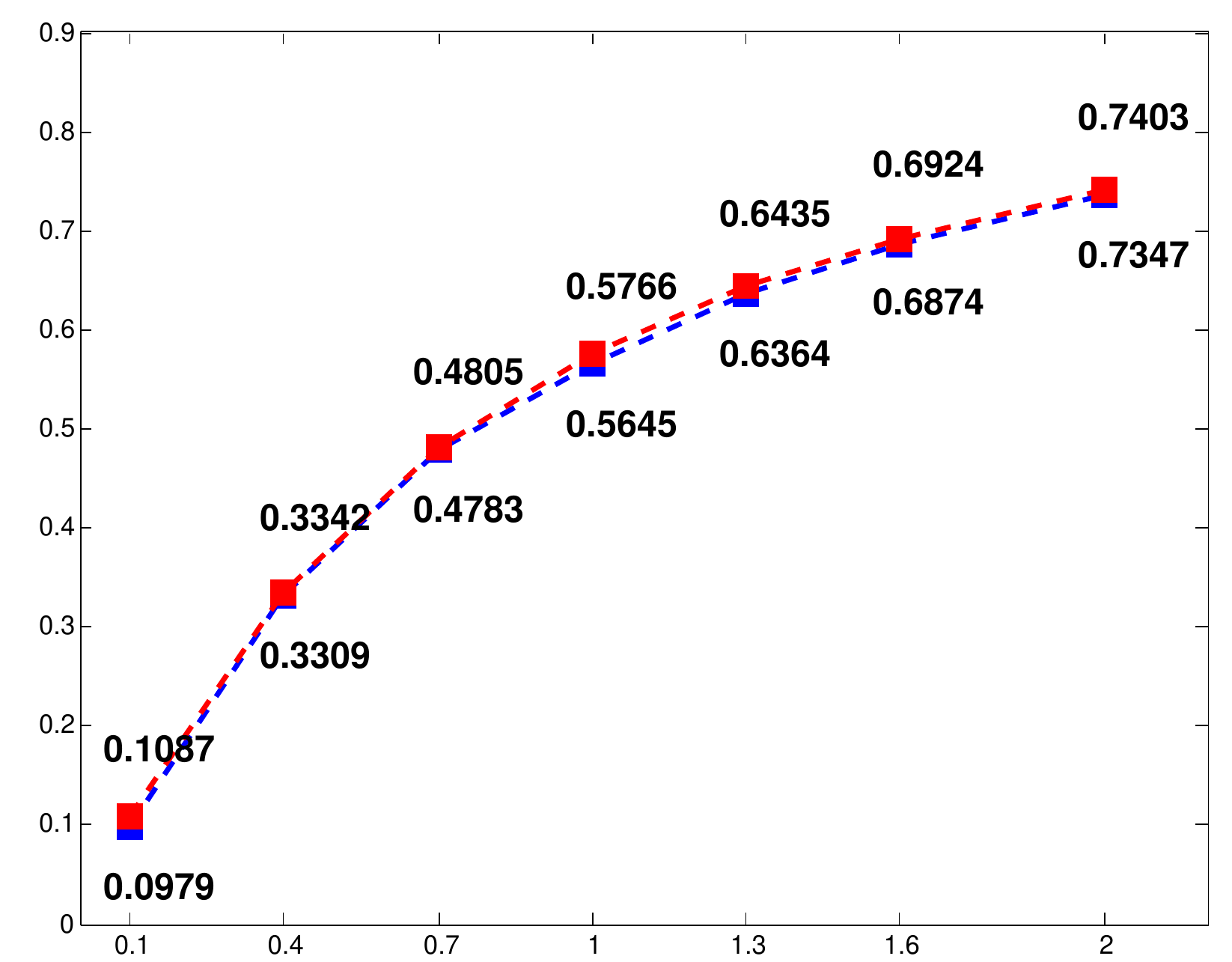}
\end{center}
\vspace*{-0.4cm}
\caption{Upper and lower bounds on the induced $\mcal{L}_2$ norm computed for the example in Section \ref{sec:catws}. }
\label{fig:catwsbounds}
\end{figure}

\subsection{Rotated LTI system} \label{sec:rotated}
The next example was constructed by taking a stable LTI system and performing a parameter dependent similarity transformation, which is actually a rotation, on the $A$ matrix. The system can be given in state-space form as follows:
\begin{align*}
\dot x &= R(\rho)^T\smat{cc}-0.5 & -0.4 \\ 3 & -0.5\tams R(\rho)x+\smat{c}0.5\\0.5\tams w \\
z&=[1~-1]x+0.1w
\end{align*}
where $R(\rho)=\ssmat{cc}\cos(\rho) & \sin(\rho) \\ -\sin(\rho) & \cos(\rho)\tamss$. The range and rate bounds for $\rho$ were chosen to be: $\rho\in [\pi/4~~\pi/2]$ and $\dot\rho\in[-0.1~~0.1]$. 
The upper bound on the induced $\mcal{L}_2$ norm was computed by using the following storage function and parameter grid:
\begin{align*}
&V(x,\rho)=x^T\left(P_0+\sum_{k=1}^{7} \rho_1^k P_k\right)x\\
&\Gamma=\{\rho_{1}=\pi/4,\ldots,\rho_{50}=\pi/2\}, \\ 
&\rho_{k+1}-\rho_{k}=\pi/196, ~~~k\in\{1\ldots 50\}
\end{align*}
We obtained  $\gamma_{ub}=3.3$. The lower bound computed at frozen parameter values over the grid above was $\gamma_{lb,fr}=2.696$. To compute the lower bound we chose the following derivative pattern for the scheduling trajectory: $r=0.1\cdot [1~ 0~ -1~ 0~ 1~ 0]$. The algorithm was allowed to tune the period length in the interval $[1~5]$. The lower bound we got is $\gamma_{lb}=3.15$, ($h=3.11$), which is again significantly larger than $\gamma_{lb,fr}$. The worst-case scheduling trajectories and the worst-case input are plotted in Fig. \ref{fig:rotatedwc}. 
(The worst-case input was constructed by starting from the unit modulus eigenvalue $-0.99997+ 0.00816i$. The parameter $K$ was set to $60$.) 
\begin{figure}
\begin{center}
\includegraphics[width=12cm]{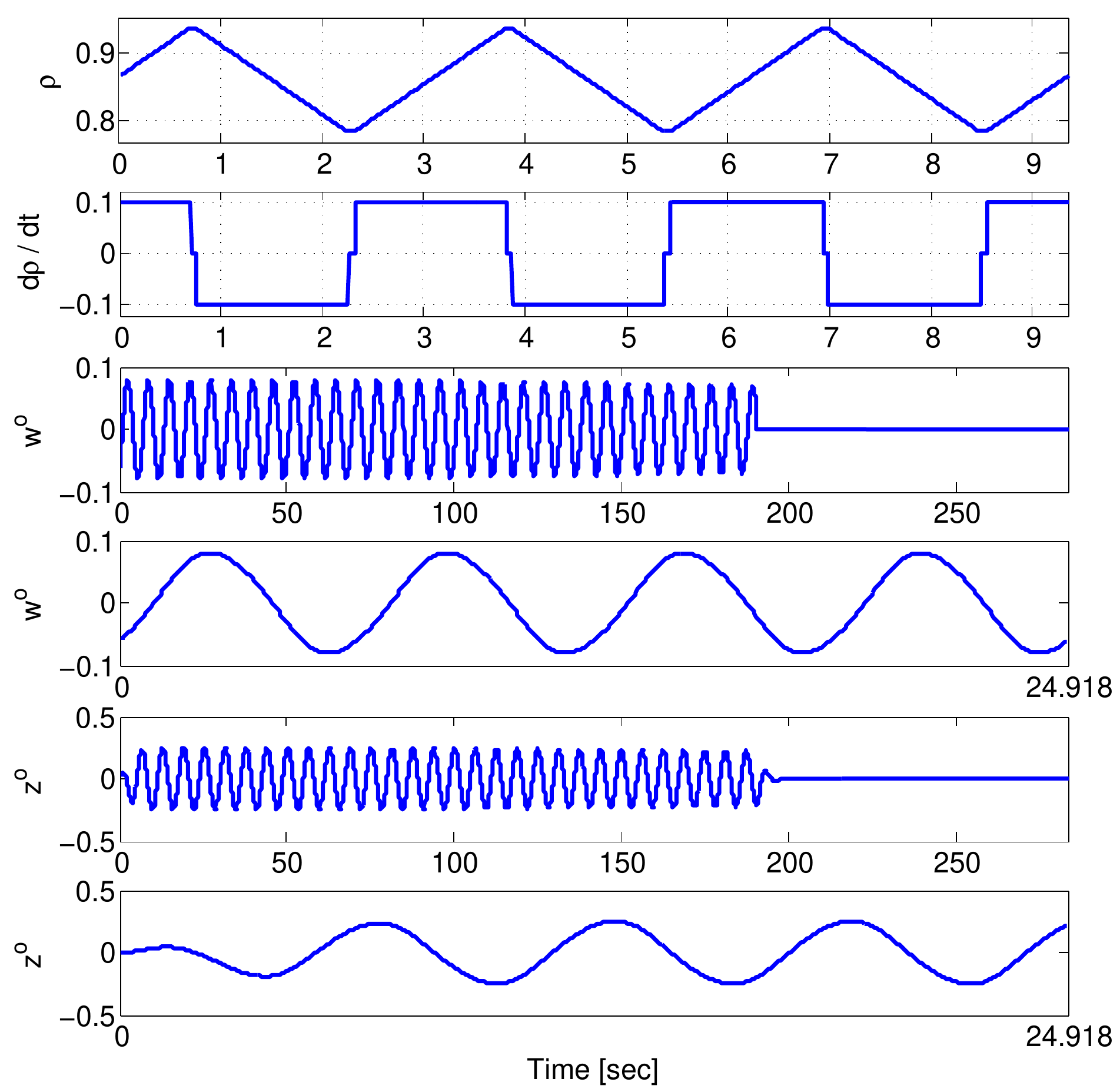}
\end{center}
\vspace*{-0.4cm}
\caption{Worst case scheduling trajectories and worst-case input/output for the LPV system in Section \ref{sec:rotated}. The first two figures show the trajectory of the worst-case scheduling parameter $\rho(t)$ and its time derivative $\dot\rho(t)$.  In the 3rd and 4th figures $w^o$ is depicted over a longer and a shorter ($[0~8h]$) time intervals. In the 5th and 6th figures $z^o$ is displayed over the same time intervals. }
\label{fig:rotatedwc}
\end{figure}

\subsection{A 2-parameter system}\label{sec:2parex}
The next, more complex example was constructed from the examples in sections \ref{sec:harald} and \ref{sec:catws}. We took two copies of the closed loop system \eqref{eq:haraldscloop} with scheduling parameter $\rho(t):=\rho_1(t)$, then we scaled the input of the first and the output of the second system by the same time-varying parameter $\rho_2(t)$ and defined the output as the difference between the outputs the two subsystems (like in Section \ref{sec:catws}). The state-space equations of the system obtained can be written as
\begin{align*}
\smat{c}\dot{x}_1\\ \dot{x}_2\tams &=\smat{cc}A(\rho_1)&0\\0&A(\rho_1)\tams \smat{c}x_1\\ x_2\tams+\smat{c}B(\rho_1)\\\rho_2B(\rho_1)\tams w \\
z&=\smat{cc}\rho_2C(\rho_1) & -C(\rho_1)\tams \smat{c}x_1\\ x_2\tams
\end{align*} 
The range and rate bounds were as follows: $2\leq\rho_1\leq7$, $-1\leq\rho_2\leq 1$ and $-1\leq\dot\rho_1,\dot\rho_2\leq 1$. The construction of the system implies that  $\gamma_{lb,fr}=0$. The upper bound on the induced $\mcal{L}_2$ norm was computed by using the following storage function and parameter grid:
\begin{align*}
&V(x,\rho)=x^T\left(P_0+\sum_{k=1}^{3} \rho_1^k P_k+\sum_{\ell=1}^5 \rho_2^{\ell}P_{\ell}+\frac{1}{\rho_1}P_9+\frac{1}{\rho_2}P_{10}\right)x\\
&\Gamma=\{\rho_{1,1}=2,\ldots,\rho_{1,30}=7\}\times \\
&~~~~~~~~~~~~~~~~~~~~~~~~~~~~~ \{\rho_{2,1}=-1,\ldots,\rho_{2,10}=1\},\\ 
&\rho_{k+1,1}-\rho_{k,1}=2/29,~~\rho_{\ell+1,2}-\rho_{\ell,2}=2/9,\\ 
&k\in\{1\ldots 30\},~~\ell=\{1\ldots 10\}
\end{align*}
We obtained  $\gamma_{ub}=5.38$. To compute the lower bound we chose piecewise linear scheduling trajectories with the following derivative patterns: $r_1=[1~ 0~ -1~ 0~ 1~ 0]$ and  $r_2=[0~ 1~ 0~ -1~ 0~ 1~ 0]$. The lower bound we got is $\gamma_{lb}=5.047$, ($h=16.2142$), which is again very close to the upper bound. The worst-case scheduling trajectories and the worst-case input are plotted in Fig. \ref{fig:haraldcatwswc}. 
(The worst-case input was constructed by starting from the unit modulus eigenvalue $-0.9997 + 0.0227i$. The parameter $K$ was set to $60$.)

\begin{figure}
\begin{center}
\includegraphics[width=12cm]{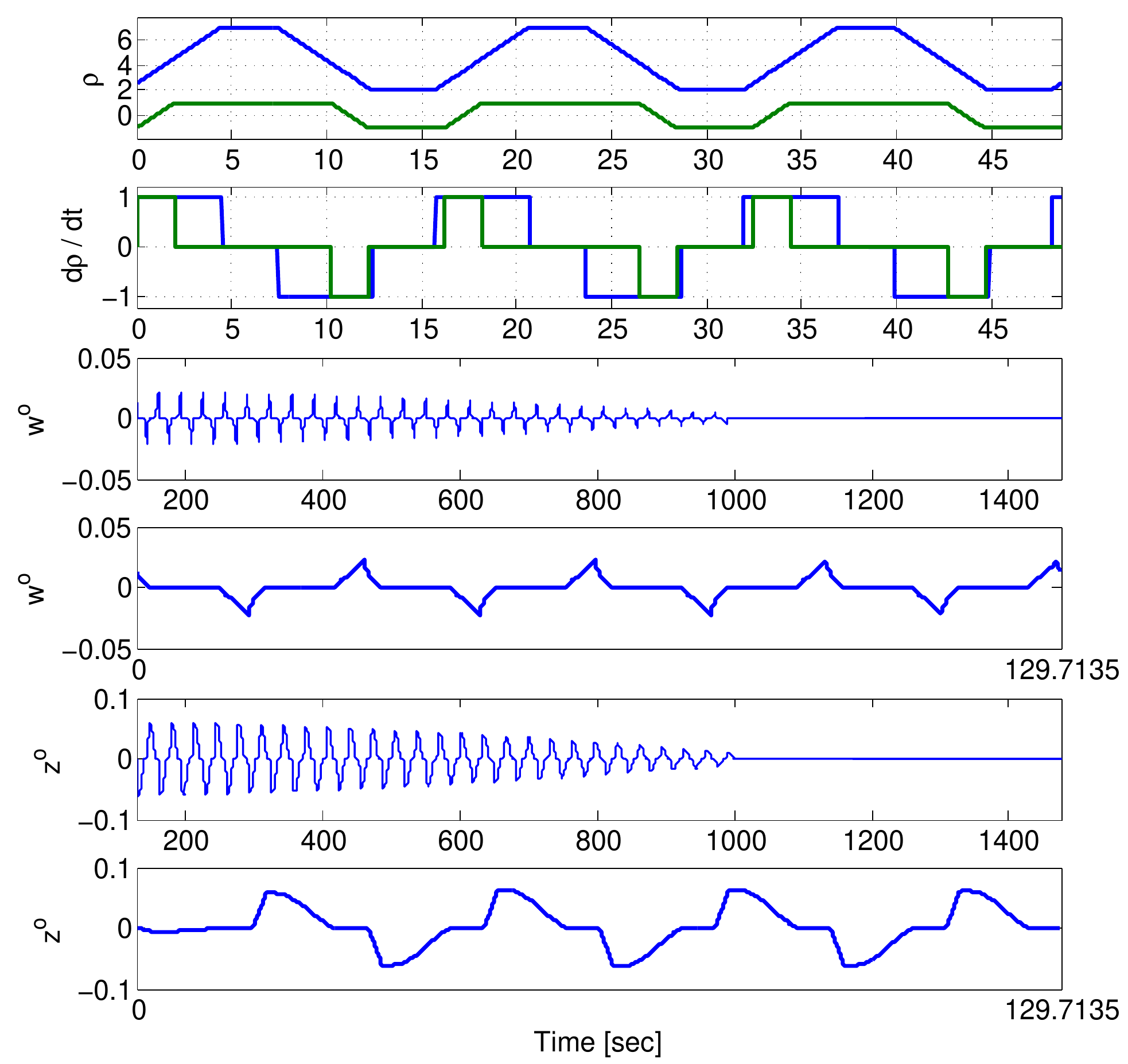}
\end{center}
\vspace*{-0.4cm}
\caption{Worst case scheduling trajectories and worst-case input/output for the LPV system in Section \ref{sec:2parex}. The first two figures show the trajectory of the worst-case scheduling parameter $\rho(t)$ and its time derivative $\dot\rho(t)$.  In the 3rd and 4th figures $w^o$ is depicted over a longer and a shorter ($[0~8h]$) time intervals. In the 5th and 6th figures $z^o$ is displayed over the same time intervals. }
\label{fig:haraldcatwswc}
\end{figure}

\section{Conclusion}\label{sec:conclusion}
In the paper a  numerical method is proposed for computing the lower bound of the induced $\mcal{L}_2$-gain of continuous-time, LPV systems. The algorithm finds this bound -- together with the worst-case parameter trajectory --  by using nonlinear optimization over periodic scheduling parameter trajectories. Restricting the domain of parameter trajectories to periodic signals enables to use the recent results for exact calculation of the $\mcal{L}_2$ norm for a periodic time varying system. It was shown that the proposed algorithm can be reliably implemented by standard numerical tools and provides precise approximation for the $\mcal{L}_2$ bound.


%

\appendices
\section{Proof of Lemma \ref{lem:innerprod}}
To prove the theorem we need the following properties of the state transition matrix:
for all $t, \tau, \tau' \in \mbb{R}$
\begin{enumerate}[(a)]
\item $\Phi(t,\tau)\Phi(\tau,t)=I$, which  implies $\Phi(t,\tau)=\Phi(\tau,t)^{-1}$ 
\item $\frac{d}{dt}\Phi(\tau,t)=-\Phi(\tau,t)A(t)$
\item $\frac{d}{dt}[\Phi(\tau,t)^*]=-A(t)^*\Phi(\tau,t)^*$.
\end{enumerate}
Item (a) follows from the definition and from the regularity \cite{bittanticolaneri09_psbook} of $\Phi(t,\tau)$; item (b) is a consequence of item (a) and can be proved by applying the chain rule of derivation to $0=\frac{d}{dt}\Phi(t,\tau)\Phi(\tau,t)$ and finally, item (c) is the direct consequence of item (b).

Then Lemma \ref{lem:innerprod} can be proved directly from the definition of the
adjoint (Equation \eqref{eq:adjLGss}) as follows:
\begin{align}
  \label{eq:TGinner}
  \langle (\hat{x}_h,\hat{z}), \mathbf{T}_G(x_0,w) \rangle 
  = \hat{x}_h^* x(h) + \int_0^h \hat{z}(t)^* z(t) \, dt
\end{align}
where $(x(h),z(t))$ are the final state and output of the PLTV system
in \eqref{eq:LGss}.  The solution of \eqref{eq:LGss} can be expressed in terms of the state
transition matrix as:
\begin{align}
  x(h) & = \Phi(h,0) x_0 + \int_0^h \Phi(h,\tau)B(\tau)w(\tau) \, d\tau \\
  z(t) & = C(t)\Phi(t,0) x_0 + \int_0^t C(t)\Phi(t,\tau)B(\tau)w(\tau) \, d\tau 
\end{align}
Use these relations to substitute for $x(h)$ and $z(t)$ in
Equation \eqref{eq:TGinner}. Re-arrange terms to obtain the following
form:
\begin{align}
\nonumber
&  \langle (\hat{x}_h,\hat{z}), \mathbf{T}_G(x_0,w) \rangle =
  \hat{x}(0)^* x(h) + \int_0^h \left[ B^*(\tau) \hat{x}(\tau) +
    D^*(\tau)\hat{z}(\tau)\right]^* w(\tau) \, d\tau
\end{align}
where we have defined the signal
\begin{align}
  \hat{x}(\tau) := \Phi(h,\tau)^* \hat{x}_h + \int_\tau^h
  \Phi(t,\tau)^* C(t)^* \hat{z}(t) \, dt
\end{align}
Using property (c) above, we obtain 
\begin{align*}
\frac{d\hat x(\tau)}{d\tau}=-A^*(\tau)\hat x(\tau)-C^*(\tau)\hat z(\tau),~~\hat x(h)=\hat x_h
\end{align*}
If we also define
$\hat{w}(\tau)=B^*(\tau)\hat x(\tau)+D^*(\tau)\hat z(\tau),$ and $\hat x_0=\hat x(0)$
then we can express the inner product $\langle (\hat x_h,\hat z), \mbf{T}_G(x_0,w) \rangle$ as
\begin{align*}
\hat x_0^*x_0+\int_0^h\hat{w}(\tau)^*w(\tau)d\tau=\langle \mbf{T}_G^*(\hat x_h,\hat z),(x_0,w)\rangle
\end{align*}
Thus the proof is complete.


\section*{Acknowledgment}
This work
was  supported by the National Science Foundation under
Grant No. NSF-CMMI-1254129 entitled ``CAREER: Probabilistic Tools for
High Reliability Monitoring and Control of Wind Farms''. Any opinions,
findings, and conclusions or recommendations expressed in this
material are those of the author and do not necessarily reflect the
views of the NSF.

The authors greatly acknowledge the help of Henrik Sandberg for making
the MATLAB code of the numerical example presented in
\cite{cantoni09_automatica} available for the purpose of this
research.

\ifCLASSOPTIONcaptionsoff
  \newpage
\fi



\bibliographystyle{IEEEtran}
\bibliography{../../../references/references}
\end{document}